\documentclass[12 pt]{article}
\usepackage{amsmath}
\usepackage{amssymb}
\usepackage[a4paper, total={6in, 9in}]{geometry}
\usepackage{amsthm,url,tikz}
\usepackage[utf8]{inputenc}
\usepackage[english]{babel}
\usepackage{graphicx}

\usepackage[pagewise]{lineno}


\usepackage{hyperref}
\usepackage{adjustbox}
\numberwithin{equation}{section}
\usepackage{makecell}
\usepackage{bm}
\usepackage{xcolor}
\newtheorem{theorem}{Theorem}[section]
\newtheorem{proposition}[theorem]{Proposition}
\newtheorem{corollary}[theorem]{Corollary}
\newtheorem{lemma}[theorem]{Lemma}
\newtheorem{definition}[theorem]{Definition}

\newtheorem{example}[theorem]{Example}
\newtheorem{remark}[theorem]{Remark}
\usepackage{enumerate}
\usepackage[nottoc,notlot,notlof]{tocbibind} 

\title{ On Function-Correcting Codes in the Lee Metric  }
\author{Gyanendra K. Verma\footnote{Department of Electrical Communication Engineering,
 Indian Institute of Science Bangalore, Karnataka 560012, India. The author is financially supported by the ANRF-NPDF grant PDF/2025/002309. email: gkvermaiitdmaths@gmail.com}\ \  and 
 Abhay Kumar Singh\footnote{
 Department of Mathematics and Computing, Indian Institute of Technology (ISM) Dhanbad, Dhanbad.
 email: abhay@iitism.ac.in\ Corresponding author: Abhay Kumar Singh}}

\date{}

\begin{document}
	\maketitle
\begin{abstract} 
Function-correcting codes are a coding framework designed to minimize redundancy while ensuring that specific functions or computations of encoded data can be reliably recovered, even in the presence of errors. The choice of metric is crucial in designing such codes, as it determines which computations must be protected and how errors are measured and corrected. Previous work by Liu and Liu \cite{liu2025} studied function-correcting codes over  $\mathbb{Z}_{2^l},\  l\geq 2$ using the homogeneous metric, which coincides with the Lee metric over $\mathbb{Z}_4$. 
In this paper, we extend the study to codes over $\mathbb{Z}_m,$ for any positive integer $m\geq 2$ under the Lee metric and aim to determine their optimal redundancy. To achieve this, we introduce irregular Lee distance codes and derive upper and lower bounds on the optimal redundancy by characterizing the shortest possible length of such codes. These general bounds are then simplified and applied to specific classes of functions, including locally bounded functions, Lee weight functions, and Lee weight distribution functions. We extend the bounds established by Liu and Liu \cite{liu2025} for codes over $\mathbb{Z}_4$
in the Lee metric to the more general setting of $\mathbb{Z}_m$. Moreover, we give explicit constructions of function-correcting codes in Lee metric.
 Additionally, we explicitly derive a Plotkin-like bound for linear function-correcting codes in the Lee metric. As the Lee metric coincides with the Hamming metric over the binary field, we demonstrate that our bound naturally reduces to a Plotkin-type bound for function-correcting codes under the Hamming metric over $\mathbb{Z}_2$. 
 Furthermore, when the underlying function is bijective, function-correcting codes reduce to classical error-correcting codes. In parallel, our bound correspondingly reduces to the classical Plotkin bound for error-correcting codes, both for the Lee metric over $\mathbb{Z}_m$ and for the Hamming metric over $\mathbb{Z}_2$.


\end{abstract}
\textbf{Keywords}: Function-correcting codes, error-correcting codes, Lee metric, optimal redundancy, locally bounded function.\\ 
	\textbf{Mathematics subject classification}: 94B60, 94B65.\\

\section{Introduction}

The Lee metric was first introduced by Lee in 1958 as a way to measure distances over cyclic alphabets such as $\mathbb{Z}_m$. Later, Golomb and Welch\cite{golomb1968}, Berlekamp \cite{berlekamp1968}, and Chiang \cite{Chiang1971}  developed some of the earliest codes specifically designed for the Lee metric, recognizing its importance in modeling error patterns for non-binary and phase-based communication systems. A method for constructing Lee metric codes over arbitrary alphabet sizes using the elementary concepts of module theory was presented in \cite{satyanarayana1979}. These foundational works laid the groundwork for much of the modern research on coding over $\mathbb{Z}_m$ using the Lee metric. The Lee metric is often preferred over the Hamming metric in coding theory, particularly when working with codes over rings like $\mathbb{Z}_m$
  (instead of binary fields), because it better captures the geometry and error characteristics of non-binary alphabet. This motivates the study of function-correcting codes for the Lee metric, as codes optimized for this metric can leverage the modular structure of $\mathbb{Z}_m$
  to achieve lower redundancy and enhanced error correction performance for the same alphabet size. Several studies have shown that Lee metric based codes can yield tighter redundancy bounds compared to their Hamming metric counterparts, making them particularly effective for applications involving non-binary alphabets and structured, incremental error patterns.\\

Lenz et al.  \cite{Lenz2023} first introduced the framework of function-correcting codes (FCCs) in the Hamming metric, establishing the foundational theory for correcting function evaluations rather than raw symbols. This framework was later extended to other metrics to capture different error models. Specifically, Xia et al. \cite{Xia2023} generalized FCC to the symbol-pair metric, while Singh et al. \cite{Singh2025} extended it further to the $b$-symbol metric. Building on these developments, in \cite{Verma2025fcc}, the authors introduced locally $b$-symbol codes and investigated the redundancy of the corresponding function-correcting $b$-symbol codes. Several related advances complement these efforts: Ge et al. \cite{Ge2025} studied optimal redundancy bounds for such codes, Ly et al. \cite{Ly2025} proposed a conjecture concerning their asymptotic performance, and Premlal et al. \cite{Premlal2024} developed a function correcting framework for linear functions. In the context of locality, Rajput et al. \cite{Rajput2025} investigated locally function-correcting codes for the Hamming metric, while Sampath et al. \cite{Sampath2025} contributed further generalizations of FCC to specialized coding scenarios. Together, these works form a cohesive progression from classical FCC in the Hamming setting to rich variants supporting diverse metrics, locality, and function-specific error corrections.\\

Function-correcting codes with respect to the homogeneous metric were proposed in \cite{liu2025}, where the authors introduced function-correcting codes with homogeneous distance (FCCHDs) as a generalization of function-correcting codes under the Hamming metric. Since both Hamming weight and Lee weight over $\mathbb{Z}_4$ can be regarded as special cases of homogeneous weights, the results in \cite{liu2025} naturally extend to the Lee metric over $\mathbb{Z}_4$. Motivated by this fact, we investigate the construction and properties of function-correcting codes for the Lee metric over $\mathbb{Z}_m$, where $m\geq 2$ is an arbitrary positive integer. We derive upper and lower bounds on the optimal redundancy of function-correcting Lee metric codes for arbitrary functions by employing shortest-length irregular Lee distance codes in Section \ref{FCLMC}. In Section \ref{boundnD}, we provide several bounds on the shortest-length irregular Lee distance codes analogous to the Plotkin and Gilbert-Vashamov bounds. Furthermore, we introduce the concept of locally bounded Lee functions, and using general theoretical results, establish bounds on the optimal redundancy for this class of functions in Section \ref{locallyFCLMC}.  In particular, we demonstrate that the lower bound on optimal redundancy is achievable for locally binary Lee functions. Additionally, we investigate the locality properties of specific functions, the Lee weight function and the Lee weight distribution function, and determine bounds on their optimal redundancy in Section \ref{leeweight}. The bounds on the optimal redundancy of function-correcting codes can be obtained as a direct consequence of our general results. In particular, by substituting $m=2$ and 
$m=4$ into Lemma \ref{plotkin}, we get the specific cases corresponding to codes in the Hamming distance over $\mathbb{Z}_2$ in \cite{Lenz2023}  and the homogeneous distance over $\mathbb{Z}_4$ in \cite{liu2025}, respectively. This shows that the well-known bounds for these two settings follow naturally from our general framework, without requiring any additional arguments.  In Section \ref{linearfcc}, we investigate function-correcting codes associated with linear functions and derive a Plotkin-type bound on the optimal redundancy. We further demonstrate that this bound generalizes several existing Plotkin-like bounds, each of which emerges as a special case of our result.

\section{Preliminaries}\label{pre}
In this section, we discuss basic definitions and results that will be used throughout the paper. We denote the residue ring modulo $m$ by $\mathbb{Z}_m$, where $m$ is a positive integer greater than or equal to $2$. We consider the elements of $\mathbb{Z}_m$ as $0,1,\dots,m-1$ unless otherwise specified. For a positive integer $k$, $\mathbb{Z}_m^k$ denotes the set of all $k$-tuples over $\mathbb{Z}_m$.
\begin{definition}\cite[Lee distance metric]{Lee1958}
For $u=(u_1,u_2,\dots,u_k),v=(v_1,v_2,\dots,v_k)\in \mathbb{Z}_{m}^k$, the Lee distance between $u$ and $v$ is defined as 
$$d_L(u,v)=\sum_{i=1}^kw_L(u_i-v_i),$$
where $w_L(u_i-v_i)=\min\{u_i-v_i,v_i-u_i\}\pmod{m}$.
\end{definition}

If we denote the elements of $\mathbb{Z}_m$ as follows
\begin{align*}
    \begin{split}
        -\frac{m-1}{2},  -\frac{m-3}{2}, \dots,-1,0,1,\dots, \frac{m-1}{2} & \text{ if } m \text{ is odd}\\
        -\frac{m-2}{2},   \dots,-1,0,1,\dots, \frac{m-2}{2},\frac{m}{2} & \text{ if } m \text{ is even},
    \end{split}
\end{align*}
then the Lee weight of an element $x_i\in \mathbb{Z}_m$ is equivalently defined as (see \cite{Chiang1971})
\begin{equation}\label{alternatelee}
    w_L(x_i)=|x_i|.
\end{equation}

The set $\mathbb{Z}_m^k$ is called the message set. An encoding function $\mathcal{E}:\mathbb{Z}_m^k\to\mathbb{Z}_m^{k+r}$ is defined as $\mathcal{E}(u)=(u,p(u))$, where $p(u)\in\mathbb{Z}_m^r$ is the redundancy vector.
The set $$C=\{\mathcal{E}(u)\in\mathbb{Z}_m^{k+r}\mid u\in \mathbb{Z}_m^k\}$$ is called a code of length $k+r$ over $\mathbb{Z}_m$ with redundancy $r$, and $\mathcal{E}(u)$ is called a codeword in $C$. The minimum Lee distance of $C$ is defined as $$d_L(C)=\min\{d_L(\mathcal{E}(u),\mathcal{E}(v))\mid u,v\in \mathbb{Z}_m^k, u\neq v\}.$$ 
A ball centered at $u\in \mathbb{Z}_m^k$ of radius $\rho$ is defined as
$$B_L(u,\rho)=\{v\in \mathbb{Z}_m^k\mid d_L(u,v)\leq \rho\}.$$

The cardinality of a ball in $\mathbb{Z}_m$ with a given radius is independent of the choice of the center. We denote $V_L(r,\rho)= |\{x\in \mathbb{Z}_m^r: d_L(x,c)\leq \rho, \text{for some } c\in \mathbb{Z}_m^r\}|$, the cardinality of a ball of radius $\rho$ in $\mathbb{Z}_m^r$. As in the case of the Hamming metric, the sizes of balls appear in the Gilbert-Varshamov bound. Later, we derive a similar bound for function-correcting codes in the Lee metric. The following proposition is therefore useful. 

\begin{proposition}\cite[Proposition 10.10]{Roth1992}
  Let $\rho<m/2$. Then the size of a ball in $\mathbb{Z}_m^r$ is given by
  $$V_L(r,\rho)=\sum_{j=0}^r2^j\binom{r}{j}\binom{\rho}{j},$$
  where $\binom{\rho}{j}=0$ whenever $j>\rho$.
\end{proposition}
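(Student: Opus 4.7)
The plan is to enumerate the ball $B_L(0,\rho) \subseteq \mathbb{Z}_m^r$ directly by conditioning on the support size of the vector, using the symmetric representation of $\mathbb{Z}_m$ introduced in equation \eqref{alternatelee}. The hypothesis $\rho < m/2$ is crucial because it guarantees that every coordinate $x_i$ of a vector in the ball satisfies $w_L(x_i) \le \rho < m/2$, so the sign in the symmetric representation is unambiguous: each nonzero coordinate corresponds to a unique pair (absolute value, sign) with absolute value in $\{1,2,\dots,\rho\}$ and sign in $\{+,-\}$.

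First I would fix an integer $j$ with $0 \le j \le r$ and count the vectors in $B_L(0,\rho)$ having exactly $j$ nonzero coordinates. The positions of the nonzero coordinates can be chosen in $\binom{r}{j}$ ways. Once the positions are fixed, one assigns a sign to each nonzero coordinate, contributing a factor $2^j$. The remaining task is to count the sequences $(a_1,\dots,a_j)$ of positive integers, each at least $1$, satisfying $a_1+\cdots+a_j \le \rho$, since these record the Lee weights of the nonzero entries.

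Next I would evaluate this count. The number of compositions of a fixed integer $s$ into $j$ positive parts is $\binom{s-1}{j-1}$, so the total number of admissible $(a_1,\dots,a_j)$ equals
\[
\sum_{s=j}^{\rho} \binom{s-1}{j-1} \;=\; \binom{\rho}{j},
\]
by the standard hockey-stick identity (with the convention that the sum is empty, hence zero, when $j > \rho$, matching $\binom{\rho}{j}=0$). Multiplying these three factors and summing over $j$ yields
\[
V_L(r,\rho) \;=\; \sum_{j=0}^{r} 2^j \binom{r}{j}\binom{\rho}{j},
\]
which is the desired formula.

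The main obstacle, though minor, is verifying that the symmetric representation produces a bijection between vectors in $B_L(0,\rho)$ and the tuples counted above; this is exactly where $\rho < m/2$ is needed, since otherwise a nonzero residue could be represented by two distinct signed integers of equal Lee weight (for instance $m/2 \equiv -m/2 \pmod m$ when $m$ is even), causing double counting. Once this is pinned down, the rest of the argument is a routine combinatorial identity.
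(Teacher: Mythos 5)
Your argument is correct. Note, however, that the paper does not prove this proposition at all --- it simply cites it from Roth's book --- so there is no in-paper proof to compare against; what you have written is a self-contained derivation of the cited result. Your decomposition is the standard one: conditioning on the support size $j$ gives $\binom{r}{j}$ choices of positions, $2^j$ choices of sign (valid precisely because $\rho<m/2$ forces every coordinate weight $a_i\le\rho<m/2$, so $a_i\not\equiv -a_i\pmod m$ and each weight value corresponds to exactly two residues), and the hockey-stick evaluation $\sum_{s=j}^{\rho}\binom{s-1}{j-1}=\binom{\rho}{j}$ counts the admissible weight profiles. You correctly identified the only delicate point, namely that the sign assignment is a bijection rather than a two-to-one map, and your treatment of the degenerate cases ($j=0$ and $j>\rho$) matches the stated conventions. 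Nothing further is needed.
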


Let $r$ be a positive integer and $\rho$ be a real number. Denote 
\begin{align*}
B_r(\rho)=\{x\in \mathbb{Z}_m^r| \ w_L(x)\leq r\rho\}.
\end{align*}
In \cite{Loeliger1994},  Loeliger established an upper bound on $B_r(\rho)$ for any coordinate-additive metric, via the entropy of an auxiliary probability distribution. We state the bound for the Lee metric using \cite[Eq.(5)]{Loeliger1994}. 
\begin{lemma}\label{lemma-nmd}
    Let $0<\rho\leq 1$ be a real number. Then 
    \begin{align*}
     |B_r(\rho)|\leq (e^{\lambda\rho}\mathcal{L}(\lambda))^r,   
    \end{align*}
    holds for all non-negative integer $\lambda$ and $r$, where $\mathcal{L}(\lambda)=\sum_{a\in \mathbb{Z}_m}e^{-w_L(a)}$ and $e$ is natural constant.
\end{lemma}

\begin{lemma}\label{lemma-lambda}
 For $m\geq 5$,   we have $\sum_{a\in \mathbb{Z}_m}e^{-w_L(a)}< 3$. Moreover, for $m\leq 4$,  $\sum_{a\in \mathbb{Z}_m}e^{-w_L(a)}\leq 2$.
\end{lemma}
\begin{proof}
If $m=2\ell$ for some positive integer $\ell$, then
\begin{align*}
   \sum_{a\in \mathbb{Z}_m}e^{-w_L(a)}=e^0+e^{-\ell}+2\sum_{i=1}^{\ell-1}e^{-i}=1+e^{-\ell}+2(e^{-1}+e^{-2}+\dots+e^{-(\ell-1)})\\
   =1+e^{-\ell}+2\frac{e^{-1}(1-e^{-\ell+1})}{1-e^{-1}}=1+e^{-\ell}+2\frac{1-e^{-\ell+1}}{e-1}\\
   \leq 1+e^{-\ell}+2(1-e^{-\ell+1})=e^{-\ell}(1-2e)+3< 3.
\end{align*}
Similarly, if $m=2\ell+1$ for some positive integer $\ell$, then
\begin{align*}
   \sum_{a\in \mathbb{Z}_m}e^{-w_L(a)}=e^0+2e^{-\ell}+2\sum_{i=1}^{\ell-1}e^{-i}=1+2e^{-\ell}+2(e^{-1}+e^{-2}+\dots+e^{-(\ell-1)})\\
   =1+2e^{-\ell}+2\frac{e^{-1}(1-e^{-\ell+1})}{1-e^{-1}}=1+2e^{-\ell}+2\frac{1-e^{-\ell+1}}{e-1}\\
   \leq 1+2e^{-\ell}+2(1-e^{-\ell+1})=2e^{-\ell}(1-e)+3< 3.
\end{align*}
For $m\leq 4$, the bound is straightforward.
\end{proof}

\begin{lemma}\cite{Wyner1968}\label{sumS}
 Let $\alpha\in \mathbb{Z}_m$. Then \begin{align*}
     S=\sum_{\beta\in\mathbb{Z}_m}d_L(\alpha,\beta)=\begin{cases}
         \frac{m^2}{4}, & \text{ if } m \text{ is even }\\
         \frac{m^2-1}{4}, & \text{ if } m \text{ is odd. }
     \end{cases}
 \end{align*}   
\end{lemma}

\begin{lemma}\cite{Wyner1968}\label{Csum}
  Let $C=\{C_1,C_2,\dots,C_M\}\subseteq \mathbb{Z}_m^r$ be a code of length $r$ and size $M$. Then
  \begin{align*}
      \sum_{i,j}d_L(C_i,C_j)=\frac{SM^2r}{m}, \text{where $S$ is as in Lemma \ref{sumS}.}
  \end{align*}
\end{lemma}

\section{Function-correcting Lee metric codes}\label{FCLMC}
In error-correcting codes, we add redundancy to each message to enable the receiver to detect and correct any errors that affect the entire message during transmission. In contrast, in function-correcting codes, we need to correct messages with different function values. It allows for more efficient encoding. This approach reduces redundancy when several messages share the same function value. As a result, function-correcting codes offer a targeted and resource-efficient alternative to classical error correction. In this section, we formally define the function-correcting Lee metric code (in short, FCLMC) and establish an equivalence between the function-correcting Lee metric code and irregular-distance codes in the Lee metric. We discuss relations and bounds on the optimal redundancy of an $(f,t)$-FCLMC. Throughout the paper, we assume that a codeword $\mathcal{E}(u)$ is transmitted through a noisy channel, the received word is $y\in \mathbb{Z}_m^{k+r}$ with $d_L(\mathcal{E}(u),y)\leq t$. In this case, when the Lee distance between the transmitted code vector and received vector is at most $t$,  we say that at most $t$ errors occur during transmission, although it does not mean that $t$ coordinates changed during transmission. We define function-correcting Lee metric codes as follows.

\begin{definition}\label{def-fclmc}
  An encoding function $\mathcal{E}:\mathbb{Z}_m^k\to\mathbb{Z}_m^{k+r}  $ with $\mathcal{E}(u)=(u,p(u))$ defines an $(f,t)$-function-correcting Lee metric code (FCLMC) with redundancy $r$ if $d_L(\mathcal{E}(u),\mathcal{E}(v))\geq 2t+1$ holds for all $u,v\in \mathbb{Z}_m^k$ with $f(u)\neq f(v)$. 
\end{definition}

\begin{remark}
The encoding algorithm for function-correcting codes involves a systematic encoding, that is, the encoded message consists of the first $k$ coordinates corresponding to the message vector, followed by the concatenation of  $r$ coordinates of the redundancy vector. The process is done by an encoding function $\mathcal{E}:\mathbb{Z}_m^k\to\mathbb{Z}_m^{k+r}  $ with $\mathcal{E}(u)=(u,p(u))\in \mathbb{Z}_m^{k+r}$, where $p(u)$ is called the redundancy vector of length $r$. The collection $\{\mathcal{E}(u)| \ u\in \mathbb{Z}_m^k\}\subseteq \mathbb{Z}_m^{k+r}$  is called a code over $\mathbb{Z}_m$ of length $k+r$ (or equivalently, with redundancy $r$), and individual encoded messages $\mathcal{E}(u)$ are called codewords. Let $\mathcal{E}:\mathbb{Z}_m^k\to\mathbb{Z}_m^{k+r}$ defines an $(f,t)$-FCLMC. For $u\in \mathbb{Z}_m^k$, $\mathcal{E}(u)\in \mathbb{Z}_m^{k+r}$ is transmitted through a noisy channel and receiver received $y\in \mathbb{Z}_m^{k+r}$. By Definition \ref{def-fclmc}, the encoding $\mathcal{E}$ allows the receiver to determine the correct value of $f(u)$ whenever $d_L(\mathcal{E}(u),y)\leq t$ as follows.    \\
Consider set $S_y=\{v\in \mathbb{Z}_m^k|\ d_L(\mathcal{E}(v),y)\leq t\}$. The set is non-empty as $u\in S_y$. Then by triangle inequality, we have
\begin{align*}
    d_L(\mathcal{E}(u),\mathcal{E}(v))\leq d_L(\mathcal{E}(u),y)+d_L(y,\mathcal{E}(v))\leq 2t, \forall v\in S_y.
\end{align*}
Thus, by Definition \ref{def-fclmc}, $f(u)=f(v)$ for all $v\in S_y$. Therefore, to obtain correct value of $f(u)$, the receiver searches for  a vector $v\in \mathbb{Z}_m^k$ with $d_L(\mathcal{E}(v),y)\leq t$. Once such a vector $v$ found, the receiver terminates the search and decodes $f(u)$ as $f(v)$.  Note that although $u\in S_y$, the found vector $v$ need not coincide with $u$, that is, under these conditions, the receiver may not  recover the exact message $u$.\end{remark}

\begin{definition}
The optimal redundancy $r_f^L(k,t)$ for a given function $f:\mathbb{Z}_m^k\to Im(f)$ is defined as 
$$r_f^L(k,t)=\min\{r\mid \exists  (f,t)-\text{FCLMC with redundancy } r\}.$$    
\end{definition}
Any classical error-correcting code with the minimum Lee distance $2t+1$, defines an $(f,t)$-FCLMC for any function $f:\mathbb{Z}_m^k\to Im(f)$.
For an integer $n$, we denote $[n]=\{1,2,\dots,n\}$ and $[n]^{+}=\max\{n,0\}$. Define the distance matrix of a given function $f$ as follows.
\begin{definition}
  Let $u_1,u_2,\dots,u_M\in\mathbb{Z}_m^k$. Define the Lee distance requirement matrix $D_f^L(t,u_1,u_2,\dots,u_M)$ of order $M\times M$ in the Lee metric for a given function $f$ as follows:
  $$[D_f^L(t,u_1,u_2,\dots,u_M)]_{ij}=\begin{cases}
      [2t+1-d_L(u_i,u_j)]^{+} & \text{if } f(u_i)\neq f(u_j)\\
      0 & \text{ otherwise}
  \end{cases}
  $$
  for all $1\leq i,j\leq M$.
\end{definition}

 Irregular-distance code in the Lee metric is defined as follows. 
\begin{definition}
    Let $D$ be an $M\times M$ square matrix over non-negative integers with diagonal entries zero. A code $C=\{C_1,C_2,\dots,C_M\}\subseteq\mathbb{Z}_m^r$ is a called $D$-code in Lee metric if $d_L(C_i,C_j)\geq [D]_{ij}$ for all $1\leq i,j\leq M$.
\end{definition}

Let $N_L(D)$ be the smallest integer $r$ such that there exists a $D$-code of length $r$. When $[D]_{ij}=d$ for all $i\neq j$, we denote $N_L(M,d)$. Now, we prove a fundamental relationship between the redundancy of optimal FCLMCs and the irregular distance codes.

\begin{theorem}\label{rfnf}
Let $f:\mathbb{Z}_m^k\to Im(f)$ be a function and $\mathbb{Z}_m^k=\{u_1,u_2,\dots,u_{m^k}\}$.  Then 
$$r_f^L(k,t)=N_L(D_f^L(t,u_1,u_2,\dots,u_{m^k})).$$
\end{theorem}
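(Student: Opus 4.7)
The plan is to establish the equality by proving both inequalities, exploiting the simple fact that the Lee distance decomposes additively across concatenated coordinates: for any $u,v\in\mathbb{Z}_m^k$ and $p,q\in\mathbb{Z}_m^r$ one has $d_L((u,p),(v,q))=d_L(u,v)+d_L(p,q)$. This single observation is what links the FCLMC definition (which bounds the distance between full codewords $\mathcal{E}(u)=(u,p(u))$) to the $D$-code definition (which only bounds distances between redundancy vectors).

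For the inequality $r_f^L(k,t)\leq N_L(D_f^L(t,u_1,\ldots,u_{m^k}))$, I would start from a $D_f^L$-code $\{C_1,\ldots,C_{m^k}\}\subseteq\mathbb{Z}_m^r$ of minimal length $r=N_L(D_f^L(\cdots))$ and define the encoder $\mathcal{E}(u_i)=(u_i,C_i)$. For any pair with $f(u_i)\neq f(u_j)$, the additive decomposition gives
\[
d_L(\mathcal{E}(u_i),\mathcal{E}(u_j))=d_L(u_i,u_j)+d_L(C_i,C_j)\geq d_L(u_i,u_j)+[2t+1-d_L(u_i,u_j)]^{+}\geq 2t+1,
\]
so this encoder is an $(f,t)$-FCLMC of redundancy $r$.

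For the reverse inequality, I would take an optimal $(f,t)$-FCLMC $\mathcal{E}(u)=(u,p(u))$ achieving $r=r_f^L(k,t)$ and set $C_i:=p(u_i)$. For any pair with $f(u_i)\neq f(u_j)$ (which forces $u_i\neq u_j$), the defining property of the FCLMC combined with the additive decomposition yields $d_L(C_i,C_j)\geq 2t+1-d_L(u_i,u_j)$; together with the trivial $d_L(C_i,C_j)\geq 0$, this gives $d_L(C_i,C_j)\geq [2t+1-d_L(u_i,u_j)]^{+}=[D_f^L(\cdots)]_{ij}$. Pairs with $f(u_i)=f(u_j)$ carry a zero entry in $D_f^L$ and impose no constraint. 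Thus $\{C_1,\ldots,C_{m^k}\}$ is a $D_f^L$-code of length $r$, so $N_L(D_f^L(\cdots))\leq r$.

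No single step is genuinely difficult; the whole argument is driven by the additive decomposition of $d_L$. The one subtlety worth noting is that the $D$-code definition does \emph{not} require the $C_i$ to be pairwise distinct, so whenever $f(u_i)=f(u_j)$ the corresponding redundancy vectors may coincide; this is precisely what allows the $D_f^L$-code to be shorter than an ordinary error-correcting code and is what the theorem ultimately exploits.
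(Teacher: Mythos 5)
Your proposal is correct and follows essentially the same route as the paper: both directions hinge on the additive decomposition $d_L((u,p),(v,q))=d_L(u,v)+d_L(p,q)$, with one direction building an encoder from a $D$-code and the other extracting a $D$-code from an optimal encoder. If anything, your direct handling of the $[\cdot]^{+}$ truncation and of possibly coinciding redundancy vectors is slightly cleaner than the paper's contrapositive phrasing, but the substance is identical.
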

\begin{proof}
Let $r_f^L(k,t)=r$ and $\mathcal{E}(u)=(u,p(u))$ be the corresponding encoding function. Let $C=\{p(u_1),p(u_2),\dots,p(u_{m^k})\}\subseteq \mathbb{Z}_m^r$ be a code of length $r$.\\
We claim that $C$ is a $D_f^L(t,u_1,u_2,\dots,u_{m^k})$-code, that is, $$N_L(D_f^L(t,u_1,u_2,\dots,u_{m^k}))\leq r.$$
If  $C$ is not a $D_f^L(t,u_1,u_2,\dots,u_{m^k})$-code, then there exist two distinct redundancy vectors $p(u_i)$ and $p(u_j)$ with $$0<d_L(p(u_i),p(u_j))<[D_f^L(t,u_1,u_2,\dots,u_{m^k})]_{ij}=2t+1-d_L(u_i,u_j).$$ This implies that $f(u_i)\neq f(u_j)$ and $$d_L(\mathcal{E}(u_i),\mathcal{E}(u_j))=d_L(u_i,u_j)+d_L(p(u_i),p(u_j))<2t+1,$$ 
which is a contradiction to the fact $\mathcal{E}$ defines an $(f,t)$-FCLMC.

Conversely, let $N_L(D_f^L(t,u_1,u_2,\dots,u_{m^k})) =N$  and $C=\{C_1,C_2,\dots,C_{m^k}\}$ be a $D_f^L(t,u_1,u_2,\dots,u_{m^k})$-code of length $N$. Define an encoding function $\mathcal{E}:\mathbb{Z}_m^k\to \mathbb{Z}_m^{k+N}$ defined as $\mathcal{E}(u_i)=(u_i,p(u_i))$, where $p(u_i)=C_i$ for all $1\leq i\leq m^k$. For $f(u_i)\neq f(u_j)$, we have 

\begin{align*}
    \begin{split}
        d_L(\mathcal{E}(u_i),\mathcal{E}(u_j))&=d_L(u_i,u_j)+d_L(C_i,C_j)\\
        &\geq d_L(u_i,u_j)+[D_f^L(t,u_1,u_2,\dots,u_{m^k}]_{ij}\\
       & =d_L(u_i,u_j)+2t+1-d_L(u_i,u_j)=2t+1.
    \end{split}
\end{align*}
Therefore, $\mathcal{E}$ defines an $(f,t)$-FCLMC. Thus $r_f^L(k,t)\leq N$. This completes the proof.
\end{proof}

\begin{corollary}\label{corndf}
 Let    $\{u_1,u_2,\dots u_M\}\subseteq\mathbb{Z}_m^k$ with $M\leq m^k$. Then the optimal redundancy of $(f,t)$-FCLMCs satisfies 
 $$r_f^L(k,t)\geq N_L(D_f^L(t,u_1,u_2,\dots,u_{M}))$$
 for any function $f:\mathbb{Z}_m^k\to Im(f)$.
\end{corollary}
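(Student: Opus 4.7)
The plan is to deduce this from Theorem \ref{rfnf} by observing that the quantity $N_L(D_f^L(t,\cdot))$ is monotone under restriction of the message list: removing messages from the list only removes rows and columns from the distance-requirement matrix, hence only removes constraints from the definition of a $D$-code.

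First, since $M \leq m^k$, I extend $u_1,\dots,u_M$ to a full enumeration $u_1,\dots,u_M,u_{M+1},\dots,u_{m^k}$ of $\mathbb{Z}_m^k$. Applying Theorem \ref{rfnf} to this enumeration gives
\[
r_f^L(k,t) \;=\; N_L\bigl(D_f^L(t,u_1,u_2,\dots,u_{m^k})\bigr).
\]
Let $r := r_f^L(k,t)$ and pick an optimal $D_f^L(t,u_1,\dots,u_{m^k})$-code $C=\{C_1,C_2,\dots,C_{m^k}\}\subseteq \mathbb{Z}_m^r$ of length $r$.

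Second, I would consider the subcode $C' := \{C_1,C_2,\dots,C_M\}\subseteq \mathbb{Z}_m^r$ obtained by keeping only the first $M$ codewords. For every $1\leq i,j\leq M$, the entry $[D_f^L(t,u_1,\dots,u_M)]_{ij}$ is defined by exactly the same formula as $[D_f^L(t,u_1,\dots,u_{m^k})]_{ij}$, since both depend only on $f(u_i),f(u_j)$ and $d_L(u_i,u_j)$. Therefore the required lower bounds on pairwise Lee distances in $C'$ are a subset of those already satisfied by $C$, and $C'$ is a $D_f^L(t,u_1,\dots,u_M)$-code of length $r$.

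Third, by the definition of $N_L$, this yields $N_L(D_f^L(t,u_1,\dots,u_M)) \leq r = r_f^L(k,t)$, which is the desired inequality. There is no real obstacle here; the only point requiring care is the bookkeeping that the entries of the restricted distance matrix coincide with the corresponding entries of the full matrix, so that every restricted constraint is inherited from the full code.
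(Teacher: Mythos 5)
Your proof is correct and follows essentially the same route as the paper: the paper simply asserts the monotonicity $N_L(D_f^L(t,u_1,\dots,u_{m^k}))\geq N_L(D_f^L(t,u_1,\dots,u_M))$ and invokes Theorem \ref{rfnf}, while you spell out the underlying restriction argument (dropping codewords only drops constraints). No gap; your version is just a more explicit write-up of the same idea.
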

\begin{proof}
 It is easy to see that    $$r_f^L(k,t)=N_L(D_f^L(t,u_1,u_2,\dots,u_{m^k}))\geq N_L(D_f^L(t,u_1,u_2,\dots,u_{M})),$$ since $M\leq m^k$. The rest follows from Theorem \ref{rfnf}.
\end{proof}

Next, we define the Lee distance between two distinct function values.
\begin{definition}\label{fncdistance}
    Let $f:\mathbb{Z}_m^k\to Im(f)$ be a function with $Im(f)=\{e_1,e_2,\dots,e_{\eta}\}$. Define Lee distance on $Im(f)$ as follows 
    $$d_L(e_i,e_j)=\min_{u_1,u_2\in\mathbb{Z}_m^k}\{d_L(u_1,u_2) |\  f(u_1)=e_i, f(u_2)=e_j\}.$$
\end{definition}
The function-distance matrix of a given function $f$ is defined as follows.
\begin{definition}
  Let $f:\mathbb{Z}_m^k\to Im(f)$ be a function with $Im(f)=\{e_1,e_2,\dots,e_{\eta}\}$. Define the function distance matrix with $(ij)$-th entry
  $$[D_f^L(t,e_1,e_2,\dots,e_{\eta})]_{ij}=\begin{cases}
      [2t+1-d_L(e_i,e_j)]^+ & \text{ if } i\neq j\\
      0 & \text{otherwise}.
  \end{cases}$$
\end{definition}

One possible approach for constructing  $(f,t)$-FCLMCs involves assigning an identical redundancy vector to all message vectors 
$u$ that correspond to the same function value under a given function. Although this approach is not strictly required for the construction of FCLMCs, it naturally gives an upper bound on the optimal redundancy of FCLMCs as follows. 

\begin{theorem}\label{imgredun}
 For any $f:\mathbb{Z}_m^k\to Im(f)$ with $Im(f)=\{e_1,e_2,\dots,e_{\eta}\}$, we have $$r_f^L(k,t)\leq N_L(D_f^L(t,e_1,e_2,\dots,e_{\eta})).$$
\end{theorem}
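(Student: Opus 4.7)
The plan is to construct an explicit encoding function achieving redundancy $N := N_L(D_f^L(t,e_1,\dots,e_\eta))$ by grouping together all messages sharing the same function value and assigning them a common redundancy vector drawn from an optimal irregular-distance code indexed by the image of $f$.

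First, I would invoke the definition of $N_L$ to obtain a $D_f^L(t,e_1,\dots,e_\eta)$-code $C=\{C_1,C_2,\dots,C_\eta\}\subseteq\mathbb{Z}_m^N$ satisfying $d_L(C_i,C_j)\geq [2t+1-d_L(e_i,e_j)]^+$ for all $i\neq j$. I would then define $\mathcal{E}:\mathbb{Z}_m^k\to\mathbb{Z}_m^{k+N}$ by $\mathcal{E}(u)=(u,p(u))$ where $p(u):=C_i$ whenever $f(u)=e_i$. This is well-defined because the fibers $f^{-1}(e_1),\dots,f^{-1}(e_\eta)$ partition $\mathbb{Z}_m^k$.

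Next, I would verify the FCLMC property. Pick any $u,v\in\mathbb{Z}_m^k$ with $f(u)=e_i\neq e_j=f(v)$. Since the encoding is systematic,
$$d_L(\mathcal{E}(u),\mathcal{E}(v))=d_L(u,v)+d_L(C_i,C_j),$$
so it suffices to bound this sum below by $2t+1$. By Definition \ref{fncdistance}, the image distance satisfies $d_L(u,v)\geq d_L(e_i,e_j)$. Split into two cases: if $d_L(e_i,e_j)\geq 2t+1$, then already $d_L(u,v)\geq 2t+1$ and we are done; otherwise $[2t+1-d_L(e_i,e_j)]^+=2t+1-d_L(e_i,e_j)$, and combining $d_L(u,v)\geq d_L(e_i,e_j)$ with $d_L(C_i,C_j)\geq 2t+1-d_L(e_i,e_j)$ yields the required bound. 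Hence $\mathcal{E}$ is an $(f,t)$-FCLMC of redundancy $N$, so $r_f^L(k,t)\leq N$.

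The argument is essentially routine; there is no serious obstacle. The one subtlety worth isolating is the observation that $d_L(e_i,e_j)$, defined as the minimum over all preimage pairs, provides a uniform lower bound on $d_L(u,v)$ for \emph{every} pair $(u,v)$ with $f(u)=e_i$ and $f(v)=e_j$, which is precisely what makes the case $d_L(e_i,e_j)\geq 2t+1$ (where the matrix entry collapses to $0$ and gives no constraint on $C$) go through.
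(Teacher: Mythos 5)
Your proposal is correct and follows essentially the same route as the paper: take an optimal $D_f^L(t,e_1,\dots,e_\eta)$-code, assign the codeword $C_i$ as a common redundancy vector to every message in the fiber $f^{-1}(e_i)$, and verify the distance condition using $d_L(u,v)\geq d_L(e_i,e_j)$ together with $d_L(C_i,C_j)\geq [2t+1-d_L(e_i,e_j)]^+$. Your explicit case split on whether $d_L(e_i,e_j)\geq 2t+1$ is a minor stylistic difference (the paper folds both cases into the single inequality $[x]^+\geq x$), but the argument is identical in substance.
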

\begin{proof}
Let $N_L(D_f^L(t,e_1,e_2,\dots,e_{\eta}))=r$, $D_f^L(t,e_1,e_2,\dots,e_{\eta})=D$ and    $C=\{C_{e_i}\mid 1\leq i\leq \eta\}\subseteq \mathbb{Z}_m^r$ be a $D$-code in the Lee metric of length $r$.  Define an encoding function $\mathcal{E}(u)=(u, C_{f(u)})$.
We claim this encoding defines an $(f,t)$-FCLMC. For this, let $u,v\in \mathbb{Z}_m^k$ with $e_i=f(u)\neq f(v)=e_j$. This implies that  $C_{f(u)}\neq C_{f(v)}$ and
\begin{align*}
    \begin{split}
        d_L(\mathcal{E}(u),\mathcal{E}(v))=&d_L(u,v)+d_L(C_{f(u)},C_{f(v)})\\
        \geq& d_L(e_i,e_j)+ [D]_{ij}\\  
        \geq &d_L(e_i,e_j)+2t+1-d_L(e_i,e_j)\\
        =&2t+1.  
        \end{split}
\end{align*}
Hence $r_f^L(k,t)\leq r$.
\end{proof}

We have the following corollary from Theorem \ref{imgredun} and Corollary \ref{corndf}.
\begin{corollary}
Let $f:\mathbb{Z}_m^k\to Im(f)$ with $Im(f)=\{e_1,e_2,\dots,e_{\eta}\}$ and a set $u_1,u_2,\dots,u_{\eta}$ such that $f(u_i)=e_i$. If $D_f^L(t,u_1,u_2,\dots,u_{\eta})=D_f^L(t,e_1,e_2,\dots,e_{\eta})$, then 
$$r_f^L(k,t)=N_L(D_f^L(t,e_1,e_2,\dots,e_{\eta})).$$
\end{corollary}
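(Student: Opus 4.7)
The proof plan is to sandwich $r_f^L(k,t)$ between the two bounds already established in Theorem \ref{imgredun} and Corollary \ref{corndf}, and then use the hypothesis that the two distance-requirement matrices coincide to collapse the sandwich.

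First, I would apply Corollary \ref{corndf} to the specific tuple $u_1,u_2,\dots,u_\eta$ of representatives satisfying $f(u_i)=e_i$. Since $\eta = |Im(f)| \leq m^k$, this tuple is an admissible choice, so
$$r_f^L(k,t) \;\geq\; N_L\!\left(D_f^L(t,u_1,u_2,\dots,u_\eta)\right).$$
By the hypothesis $D_f^L(t,u_1,\dots,u_\eta)=D_f^L(t,e_1,\dots,e_\eta)$, the right-hand side is precisely $N_L(D_f^L(t,e_1,\dots,e_\eta))$, which gives the lower bound
$$r_f^L(k,t) \;\geq\; N_L\!\left(D_f^L(t,e_1,e_2,\dots,e_\eta)\right).$$

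Second, Theorem \ref{imgredun} provides the matching upper bound $r_f^L(k,t) \leq N_L(D_f^L(t,e_1,e_2,\dots,e_\eta))$ for free, so equality follows immediately. The only subtlety to verify is that the hypothesis is being used in the right direction: namely, Corollary \ref{corndf} gives a lower bound in terms of the \emph{message-level} distance matrix $D_f^L(t,u_1,\dots,u_\eta)$, while Theorem \ref{imgredun} gives an upper bound in terms of the \emph{image-level} distance matrix $D_f^L(t,e_1,\dots,e_\eta)$, and the assumed equality of these matrices is exactly what bridges the two sides.

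I do not foresee a genuine obstacle here; the corollary is essentially a bookkeeping consequence of the two bounds already proved, and the proof should fit in three or four lines. The only thing worth noting in the write-up is that the $u_i$ need not be unique preimages of the $e_i$ in general, but because the hypothesis is stated with respect to a fixed choice of representatives, no further choice is required and the argument is immediate.
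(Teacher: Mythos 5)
Your proposal is correct and matches the paper's intent exactly: the paper states this corollary as an immediate consequence of Theorem \ref{imgredun} (upper bound via the image-level matrix) and Corollary \ref{corndf} (lower bound via the chosen representatives), with the hypothesis equating the two matrices to force equality. No gap; your write-up is essentially the proof the paper leaves implicit.
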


\section{Bounds on the shortest length of irregular-distance codes}\label{boundnD}
In this section, we present some bounds on $N_L(D)$ that give bounds on the optimal redundancy using results from the previous section. In the next lemma, we give a lower bound on $N_L(D)$ for a matrix $D$ over non-negative integers, which is a generalization of the Plotkin bound for codes with irregular Lee distance requirements.

\begin{lemma} \label{plotkin}
Let $D$ be an $M\times M$ matrix over non-negative integers with diagonal entries zero. Then 
  \begin{align*}
      N_L(D)\geq \begin{cases}
          \frac{8}{M^2m} \sum_{i,j; i<j}[D]_{ij}, & \text{ if } m \text{ is even}\\
          \frac{8m}{M^2(m^2-1)} \sum_{i,j; i<j}[D]_{ij}, & \text{ if } m\text{ is odd.}
      \end{cases}
  \end{align*}  
\end{lemma}
\begin{proof}
   Let $N_L(D)=r$. Let $C\subseteq\mathbb{Z}_m^r$ be a $D$-code in the Lee metric of length $r$ and size $M$. Consider an $M\times r$ matrix $A$ whose rows are codewords $\{C_1,C_2,\dots,C_M\}=C$. By \cite[Eq. (15)]{Wyner1968}, we have 
    \begin{align*}
    \begin{split}
        \sum_{1\leq i<j\leq M}d_L(C_i,C_j)&
        \leq \frac{SM^2r}{2m},
 \end{split}      
    \end{align*}
where \begin{align*}
    S=\begin{cases}
        \frac{m^2}{4}, & \text{ if } m \text{ is even}\\
        \frac{m^2-1}{4}, & \text{ if } m \text{ is odd.}
    \end{cases}
\end{align*}
Since $C$ is a $D$-code, $d_L(C_i,C_j)\geq [D]_{ij}$. Thus,
\begin{align*}
   r\geq \frac{2m}{SM^2} \sum_{i,j; i<j}[D]_{ij}.
\end{align*}
This completes the proof.
\end{proof}

It is well known that the Lee metric coincides with the Hamming metric over $\mathbb{Z}_2$ and coincides with the homogeneous metric over $\mathbb{Z}_4$. Hence, we state the lemma for these two metrics as derived in \cite{Lenz2023} and \cite{liu2025}, respectively.
    
\begin{corollary}\cite{liu2025}\label{plot4}
  Let $m=4$.  For any matrix $D\in \mathbb{N}_0^{M\times M}$ with diagonal entries zero, we have 
     \begin{align*}
      N_L(D)\geq 
          \frac{2}{M^2} \sum_{i,j; i<j}[D]_{ij}.
  \end{align*} 
\end{corollary}

\begin{corollary}\cite{Lenz2023}\label{plot2}
    Let $m=2$.  For any matrix $D\in \mathbb{N}_0^{M\times M}$ with diagonal entries zero, we have 
     \begin{align*}
      N_L(D)\geq \frac{4}{M^2} \sum_{i,j; i<j}[D]_{ij}. 
  \end{align*} 
\end{corollary}

Analogous to the Gilbert-Varshamov bound, we derive the following lemma for irregular-distance codes in the Lee metric. 

\begin{lemma}\label{lemma-GVbound}
 Let $D\in \mathbb{N}^{M\times M}$ be a distance matrix. Then 
 $$N_L(D)\leq \min_{r\in \mathbb{N}}\left \{r:m^r>\max_{j\in [M]}\sum_{i=1}^{j-1}V_L(r,[D]_{ij}-1)\right \},$$
 where $V_L(r,\rho)$ denotes the cardinality of the ball in $\mathbb{Z}_m^r$ of radius $\rho$.
\end{lemma}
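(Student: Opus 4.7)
The plan is to prove this Gilbert--Varshamov style bound by a greedy sequential construction: select codewords $C_1,C_2,\dots,C_M \in \mathbb{Z}_m^r$ one at a time and show that, as long as the ambient space is strictly larger than the union of the forbidden balls around the previously chosen codewords, the selection can always be continued. Fix $r$ to be the smallest integer satisfying $m^r > \max_{j\in[M]} \sum_{i=1}^{j-1} V_L(r,[D]_{ij}-1)$; the goal is to produce a $D$-code of length $r$, which will give $N_L(D)\leq r$.

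First I would pick $C_1 \in \mathbb{Z}_m^r$ arbitrarily (the sum over $i=1$ to $0$ is empty). Then I would proceed by induction on $j$: assuming $C_1,\dots,C_{j-1}$ have already been selected so that $d_L(C_a,C_b)\geq [D]_{ab}$ for all $1\leq a,b\leq j-1$, I want to choose $C_j$ satisfying $d_L(C_j,C_i)\geq [D]_{ij}$ for every $i<j$. The set of points in $\mathbb{Z}_m^r$ that violate at least one of these constraints is exactly $\bigcup_{i=1}^{j-1}B_L(C_i,[D]_{ij}-1)$, so $C_j$ must be selected from the complement of this union.

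The key step is a union-bound estimate: since the Lee-ball cardinality $V_L(r,\rho)$ does not depend on the center, we get
\[
\Bigl|\bigcup_{i=1}^{j-1}B_L(C_i,[D]_{ij}-1)\Bigr| \;\leq\; \sum_{i=1}^{j-1}V_L(r,[D]_{ij}-1).
\]
By the defining condition on $r$, this bound is strictly less than $m^r = |\mathbb{Z}_m^r|$, so the complement is nonempty and a valid $C_j$ exists. Iterating through $j=1,2,\dots,M$ produces a $D$-code of length $r$ and size $M$, giving $N_L(D)\leq r$, and taking the minimum over all admissible $r$ gives the stated bound.

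The argument is essentially routine once the greedy framework is in place; the only mildly subtle point is that at step $j$ we only impose constraints coming from indices $i<j$ (the constraints from $i>j$ will be verified later when those codewords are added, using symmetry of a distance matrix). This one-sided accounting is exactly what produces the inner sum $\sum_{i=1}^{j-1}$ in the statement, and the outer $\max_{j\in[M]}$ guarantees the greedy step succeeds simultaneously at every stage of the construction.
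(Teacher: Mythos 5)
Your proposal is correct and follows essentially the same greedy Gilbert--Varshamov construction as the paper: pick codewords sequentially, and at step $j$ use the fact that $m^r$ exceeds the total volume $\sum_{i=1}^{j-1}V_L(r,[D]_{ij}-1)$ of the forbidden balls around $C_1,\dots,C_{j-1}$ to find a valid $C_j$. Your write-up is in fact slightly more careful than the paper's, since you make the union bound and the implicit symmetry assumption on $D$ explicit.
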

\begin{proof}
   We construct a $D$-code in the Lee metric of length $r$, where $$m^r\geq \max_{j\in [M]}\sum_{i=1}^{j-1}V_L(r,[D]_{ij}-1).$$ First, select an arbitrary codeword $C_1$. For $j=2$, since $m^r>V_L(r,[D]_{12}-1)$, there exists $C_2\in \mathbb{Z}_m^r$ such that $d_L(C_1,C_2)\geq [D]_{12}$. For $j=3$, $m^r>V_L(r,[D]_{13}-1)+V_L(r,[D]_{23}-1)$, there exists $C_3\in\mathbb{Z}_m^r$ such that $d_L(C_1,C_3)\geq [D]_{13}$ and $d_L(C_2,C_3)\geq [D]_{23}$. Similarly, we chose $C_4,C_5,\dots,C_M$ such that $d_L(C_i,C_j)\geq [D]_{ij}$ for $i<j$. Thus, if $$m^r>\max_{j\in [M]}\sum_{i=1}^{j-1}V_L(r,[D]_{ij}-1),$$ then there exist a $D$-code in the Lee metric, namely the code $\{C_1,C_2,\dots,C_M\}$. This completes the proof.
\end{proof}



Recall that $N_L(M, d)$ is smallest possible length of a $D$-code in Lee metric, where $D\in \mathbb{N}_0^{M\times M}$ and $[D]_{ij}=d$ for all $i\neq j$.  Using Lemma \ref{lemma-GVbound} and Lemma \ref{lemma-nmd}, we have an upper bound on $N_L(M,d)$.

\begin{corollary}\label{cor-GVm}
    For any $M,d,m\in \mathbb{N}$ with $M>1, m\geq 5$, we have
    \begin{align*}
        N_L(M,d)\leq \left \lceil \frac{\ln(M-1)+(d-1)}{\ln m-\ln 3}\right \rceil
    \end{align*}
\end{corollary}
\begin{proof}
Let $D$ be a matrix with $[D]_{ij}=d$ for all $i\neq j$. Then by Lemma \ref{lemma-GVbound},
\begin{align*}
    N_L(M,d)\leq \min_{r\in \mathbb{N}}\left \{r: \ m^r> \max_{j\in [M]}\sum_{i=1}^{j-1}V_L(r,d-1)    \right \}\\
    \leq \min_{r\in \mathbb{N}}\left \{r: \ m^r> (M-1)V_L(r,d-1)    \right \}.
\end{align*}
Take $\lambda=1$ in Lemma \ref{lemma-nmd}, we have 
\begin{align*}
    V_L(r,d-1)=\left | B_r\left (\frac{d-1}{r}\right ) \right |\leq e^{d-1}\left ( \sum_{a\in \mathbb{Z}_m} e^{-w_L(a)} \right )^r<e^{d-1}3^r.
\end{align*}
Consequently, $m^r>(M-1)V_L(r,d-1)$ whenever $r> \frac{\ln(M-1)+(d-1)}{\ln m-\ln 3}$. Hence, $N_L(M,d)\leq \left \lceil \frac{\ln(M-1)+(d-1)}{\ln m-\ln 3}\right \rceil$.
\end{proof}

\begin{remark}
The Lee metric coincides with the homogeneous metric for $m=4$ and with the Hamming metric for $m=2$.  In \cite{liu2025} and \cite{Lenz2023}, the authors derived bounds on $N_L(M,d)$ for $m=4$ and $m=2$, respectively.
\end{remark}

Let
$N_L(\eta,d)=\min\{n\mid \exists \text{a code } C\subseteq\mathbb{Z}_m^n \text{ with } d_L(C)\geq d,  |C|=\eta\}$. That is, $N_L(\eta,d)$ is the minimum possible length of a code over $\mathbb{Z}_m$ with size $\eta$ and the minimum Lee distance $d$. The following theorem gives lower and upper bounds on the optimal redundancy of FCLMCs.

\begin{theorem}\label{redun}
Let $f:\mathbb{Z}_m^k\to Im(f)$ be a function with $\mid Im(f)\mid =\eta\geq 2$. Then
$$N_L(2,2t)\leq r_f^L(k,t)\leq  N_L(\eta,2t).$$
\end{theorem}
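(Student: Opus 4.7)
The plan is to prove the two inequalities by leveraging Theorem \ref{imgredun} and a simple connectivity argument on the Lee graph over $\mathbb{Z}_m^k$, with no need to invoke Plotkin-type or Gilbert--Varshamov-type machinery.

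For the upper bound, I would take a code $C \subseteq \mathbb{Z}_m^{N_L(\eta,2t)}$ of size $\eta$ and minimum Lee distance at least $2t$, index its codewords by $Im(f) = \{e_1,\dots,e_\eta\}$, and check that it meets every requirement of $D_f^L(t,e_1,\dots,e_\eta)$. The key observation is that for $i \neq j$ one has $d_L(e_i,e_j) \geq 1$: any pair $u_1,u_2$ with $f(u_1)=e_i$ and $f(u_2)=e_j$ must be distinct, so their Lee distance is at least $1$. Hence every off-diagonal entry $[D_f^L]_{ij} = [2t+1-d_L(e_i,e_j)]^{+}$ is at most $2t$, so $C$ is a $D_f^L(t,e_1,\dots,e_\eta)$-code, and Theorem \ref{imgredun} gives $r_f^L(k,t) \leq N_L(\eta,2t)$.

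For the lower bound, I would first produce $u_1,u_2 \in \mathbb{Z}_m^k$ with $d_L(u_1,u_2)=1$ and $f(u_1)\neq f(u_2)$. This reduces to observing that the graph on $\mathbb{Z}_m^k$ whose edges are pairs at Lee distance exactly $1$ is connected; it is simply the Cayley graph of the additive group $\mathbb{Z}_m^k$ with the standard coordinate generators, and those generate the whole group. Since $|Im(f)| \geq 2$, the partition of $\mathbb{Z}_m^k$ into fibers of $f$ has at least two nonempty classes, so connectedness forces some edge to cross between classes. Given such a neighbor pair and any optimal FCLMC $\mathcal{E}(u)=(u,p(u))$ of redundancy $r=r_f^L(k,t)$, the defining inequality
$$d_L(\mathcal{E}(u_1),\mathcal{E}(u_2)) = d_L(u_1,u_2) + d_L(p(u_1),p(u_2)) \geq 2t+1$$
together with $d_L(u_1,u_2)=1$ yields $d_L(p(u_1),p(u_2))\geq 2t$. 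Thus $\{p(u_1),p(u_2)\}$ is a two-word code of length $r$ in $\mathbb{Z}_m^r$ with minimum Lee distance at least $2t$, so $r_f^L(k,t)\geq N_L(2,2t)$.

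The only delicate step is the existence of the adjacent pair with distinct function values; everything else is a direct unpacking of the FCLMC definition and Theorem \ref{imgredun}. An equivalent but slightly more algebraic route is to apply Corollary \ref{corndf} with $M=2$ to such a pair, observing that $N_L(D_f^L(t,u_1,u_2))=N_L(2,2t)$ as soon as $d_L(u_1,u_2)=1$.
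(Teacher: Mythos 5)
Your proposal is correct and follows essentially the same route as the paper: the lower bound rests on finding a Lee-distance-$1$ pair with distinct function values via connectivity of $\mathbb{Z}_m^k$ under unit Lee steps (the paper phrases this as a ball-propagation contradiction, you as Cayley-graph connectivity), and the upper bound assigns one codeword of a minimum-distance-$2t$ code per function value, using $d_L(u,v)\geq 1$ for distinct messages. Routing the upper bound through Theorem \ref{imgredun} rather than re-verifying the encoding directly is a cosmetic difference only.
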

\begin{proof}
First, we prove that there exists $u,v\in \mathbb{Z}_m^k$ such that $f(u)\neq f(v)$ and $d_L(u,v)=1$. On contrary, suppose that for all $u,v\in \mathbb{Z}_m^k$ with $d_L(u,v)=1$, we have $f(u)=f(v)$. Fix, $u\in \mathbb{Z}_m^k$, then $f(v)=f(u)$ for all $v\in B_L(u,1)$. Next, fix a $u\neq u'\in B_L(u,1)$, then $f(v)=f(u')=f(u)$ for all $v\in B_L(u',1)$. Similarly, doing same process, we have $f(v)=f(u)$ for all $v\in \mathbb{Z}_m^k$. This implies $\mid Im(f)\mid=1$, which is a contradiction. Therefore, by Corollary \ref{corndf}, we have $r_f^L(k,t)\geq N_L(2,2t)$.

 Let $N_L(\eta,2t)=N$ and $C=\{C_i\mid 1\leq i\leq \eta\}\subseteq \mathbb{Z}_m^N$ be a code of length $N$ with size $\eta$ and minimum Lee distance $2t$. Let $Im(f)=\{e_1,e_2,\dots,e_\eta\}$.

 Define an encoding function $\mathcal{E}(u)=(u, C_u)$, where
 \begin{align*}
     C_u=\begin{cases}
         C_1 & \text{ if } f(u)=e_1\\
C_2 & \text{ if } f(u)=e_2\\
\vdots\\
C_{\eta} & \text{ if } f(u)=e_{\eta}.         
     \end{cases}
 \end{align*}
We claim that this encoding defines an $(f,t)$-FCLMC. For this, let $u,v\in \mathbb{Z}_m^k$ with $f(u)\neq f(v)$. Then $C_u\neq C_v$ and thus, 
\begin{align*}
    \begin{split}
        d_L(\mathcal{E}(u),\mathcal{E}(v))=d_L(u,v)+d_L(C_u,C_v)   \geq 1+2t.
        \end{split}
\end{align*}
Hence $r_f^L(k,t)\leq N_L(\eta,2t)$.
\end{proof}

\begin{remark}
For any $f:\mathbb{Z}_m^k\to Im(f)$ with $Im(f)=\{e_1,e_2,\dots,e_{\eta}\}$, by Theorem \ref{imgredun}, the optimal redundancy of FCLMCs is upper bounded by
$N_L(D_f^L(t,e_1,e_2,\dots,e_{\eta}))$. Note that \begin{align*}
    [D_f^L(t,e_1,e_2,\dots,e_{\eta})]_{ij}=[2t+1-d_L(e_i,e_j)]^+\leq 2t\ \  \text{ for all } i\neq j.
\end{align*}
Let $C$ be the code constructed in the proof of Theorem \ref{redun}. Since 
\begin{align*}
    d_L(C_i,C_j)\geq 2t\geq [D_f^L(t,e_1,e_2,\dots,e_{\eta})]_{ij},
\end{align*}
$C$ is a $D_f^L(t,e_1,e_2,\dots,e_{\eta})$-code in Lee metric. Thus, \begin{align*}
    N_L(D_f^L(t,e_1,e_2,\dots,e_{\eta}))\leq N_L(\eta,2t).
\end{align*}
Hence, Theorem $\ref{imgredun}$ gives a sharper bound than Theorem \ref{redun}. Nevertheless, Theorem \ref{redun}  remains significant as it gives an explicit bound together with an explicit construction (see Proposition \ref{prop-l2t}) that depends only on the cardinality of the image set and $t$. While Theorem \ref{imgredun} requires first computing the Lee distances between distinct function values in order to construct the matrix $D_f^L(t,e_1,e_2,\dots,e_{\eta})$ and subsequently determining the value $N_L(D_f^L(t,e_1,e_2,\dots,e_{\eta}))$.
\end{remark}

Theorem \ref{redun} suggests that determining the value $N_L(2,2t)$ and $N_L(\eta,2t)$ will give lower and upper bounds on the optimal redundancy of $(f,t)$-FCLMCs.

\begin{proposition}\label{nl2t}
For positive integers $t$ and $m\geq 2$, we have 
$$N_L(2,2t)=\left \lceil\frac{2t}{\left \lfloor \frac{m}{2}\right \rfloor} \right \rceil.$$
\end{proposition}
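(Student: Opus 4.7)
The plan is to prove the two inequalities separately, treating $N_L(2,2t)$ as a minimum length of a binary-sized code over $\mathbb{Z}_m$ with Lee distance at least $2t$.

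For the lower bound, I would use the basic fact (which follows from the alternative description \eqref{alternatelee}) that every element of $\mathbb{Z}_m$ has Lee weight at most $\lfloor m/2 \rfloor$. Thus if $C=\{c_1,c_2\}\subseteq\mathbb{Z}_m^n$ is any two-codeword code, then coordinate-wise
\[
d_L(c_1,c_2)=\sum_{i=1}^n w_L((c_1)_i-(c_2)_i)\leq n\left\lfloor\tfrac{m}{2}\right\rfloor.
\]
If $n=N_L(2,2t)$, the defining requirement $d_L(c_1,c_2)\geq 2t$ combined with the above inequality forces $n\lfloor m/2\rfloor\geq 2t$, and since $n$ is an integer, $n\geq \lceil 2t/\lfloor m/2\rfloor\rceil$.

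For the matching upper bound, I would exhibit an explicit two-codeword code achieving the bound. Set $n=\lceil 2t/\lfloor m/2\rfloor\rceil$ and take
\[
c_1=(0,0,\ldots,0),\qquad c_2=\bigl(\lfloor m/2\rfloor,\lfloor m/2\rfloor,\ldots,\lfloor m/2\rfloor\bigr)\in\mathbb{Z}_m^n.
\]
Then $d_L(c_1,c_2)=n\lfloor m/2\rfloor\geq 2t$ by the choice of $n$, so $\{c_1,c_2\}$ is a valid code and $N_L(2,2t)\leq n$. Combining the two inequalities yields the claimed equality.

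There is no real obstacle here; the argument is essentially a one-line counting bound together with an explicit construction. The only minor care required is the ceiling in the lower bound, which is handled by noting that $N_L(2,2t)$ is an integer, and verifying that the saturated vector $(\lfloor m/2\rfloor,\ldots,\lfloor m/2\rfloor)$ really attains Lee weight $\lfloor m/2\rfloor$ in each coordinate (which is precisely the content of \eqref{alternatelee}, since $|\lfloor m/2\rfloor|=\lfloor m/2\rfloor$ in both the odd and even cases of the centered representation of $\mathbb{Z}_m$).
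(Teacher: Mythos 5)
Your proposal is correct and follows essentially the same approach as the paper: the same saturated codeword $(\lfloor m/2\rfloor,\ldots,\lfloor m/2\rfloor)$ for the upper bound, and the same coordinate-wise bound $d_L\leq n\lfloor m/2\rfloor$ for the lower bound (the paper phrases it as showing length $s-1$ fails, while you invoke integrality of $n$ directly, but these are equivalent).
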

\begin{proof}
Denote $\left \lceil\frac{2t}{\left \lfloor \frac{m}{2}\right \rfloor} \right \rceil=s$. Consider $c=(0,0,,\dots,0)$ and $d=(\lfloor\frac{m}{2}\rfloor, \lfloor\frac{m}{2}\rfloor,\dots, \lfloor\frac{m}{2}\rfloor)$ of length $s$. Then
$$d_L(c,d)=\sum_{i=1}^sw_L\left (0-\left\lfloor \frac{m}{2} \right \rfloor\right)=s\cdot \left \lfloor \frac{m}{2} \right \rfloor\geq 2t .$$
Thus, $N_L(2,2t)\leq s$.

Conversely, let $c=(c_1,c_2,\dots,c_{s-1})$ and $d=(d_1,d_2,\dots,d_{s-1})$ be two codewords of length $s-1$. Then, as the maximum Lee weight of an element in $\mathbb{Z}_m$ can be $\left \lfloor\frac{m}{2}\right \rfloor$, we have
$$d_L(c,d)=\sum_{i=1}^{s-1}w_L (c_i-d_i)\leq (s-1)\cdot \left \lfloor \frac{m}{2} \right \rfloor <2t.$$
Therefore, $N_L(2,2t)\geq s$. This completes the proof. 
\end{proof}

In the following result, we construct a code over $\mathbb{Z}_m$ consisting of $\lambda$ codewords with minimum Lee distance $2t$ . This construction yields an upper bound on $N_L(\lambda, 2t)$ for arbitrary positive integers $\lambda$ and  $t$.
\begin{proposition}\label{prop-l2t}
Let $\lambda$ and $t$ be positive integers. Then
\begin{align*}
   N_L(\lambda,2t)\leq \lambda\left \lceil \frac{ t}{\left \lfloor\frac{m}{2}\right \rfloor}\right \rceil. 
\end{align*}
\end{proposition}
\begin{proof}
 Let $n=\left \lceil \frac{ t}{\left \lfloor\frac{m}{2}\right \rfloor}\right \rceil$, and $\bm{0}=(0,0,\dots,0)$, $\bm{a}=\left (  \left \lfloor \frac{m}{2}\right \rfloor, \left \lfloor \frac{m}{2}\right \rfloor,\dots, \left \lfloor \frac{m}{2}\right \rfloor   \right )\in \mathbb{Z}_m^{n}$. Consider code $C=\{C_1,C_2,\dots,C_{\lambda}\}\subseteq \mathbb{Z}_m^{\lambda n}$, where for $1\leq i \leq \lambda$ is given by
 \begin{align*}
     C_i=(\bm{0}, \dots,\bm{0},\underbrace{\bm{a}}_{i\text{-th-place}},\bm{0},\dots,\bm{0})\in (\mathbb{Z}_m^{n})^\lambda.
 \end{align*}
 Observe that for $i\neq j$,
 \begin{align*}
     d_L(C_i,C_j)=2w_L(\bm{a})&=2n \left \lfloor \frac{m}{2}\right \rfloor=
     2\left \lceil \frac{ t}{\left \lfloor\frac{m}{2}\right \rfloor}\right \rceil  \left \lfloor \frac{m}{2}\right \rfloor\\
     & \geq 2\frac{ t}{\left \lfloor\frac{m}{2}\right \rfloor} \left \lfloor \frac{m}{2}\right \rfloor
    =2t,
 \end{align*}
 since $\lceil x\rceil\geq x$. Thus $C$ has the minimum Lee distance $2t$. This completes the proof.
\end{proof}

\begin{remark}
    By Proposition \ref{nl2t},  if $m=2$, then $N_L(2,2t)=2t$ and 
     if $m=4$, then $N_L(2,2t)=t$. Thus, Proposition \ref{nl2t} is consistent with the results obtained in \cite{Lenz2023} and \cite{liu2025} for the Hamming distance and the homogeneous distance, respectively. 
\end{remark}

\section{ FCLMCs for the locally functions}\label{locallyFCLMC}
In applications, we utilize specific functions or classes of functions that exhibit desirable properties, enhancing the efficiency of function-correcting codes. These structured functions often simplify encoding and decoding processes. In this section, we investigate one such class, known as locally defined functions. We explore the optimal redundancy of FCLMCs for these functions. This focused study helps in constructing more practical and effective FCLMCs. First, we provide values of $\lambda$ and $\rho$ for which a given function behaves like a locally $(\lambda,\rho)_L$-function, and then provide bounds on the optimal redundancy. Moreover, we describe locality and the optimal redundancy of $(f,t)$-FCLMCs for the Lee weight function and the Lee weight distribution function.  We start by defining the function ball and locally functions in the Lee metric.
\begin{definition}
    Let $u\in \mathbb{Z}_m^k$. A function ball centered at $u$ of radius $\rho$ is defined as
    $$B_L^f(u,\rho)=\{f(v)\mid v\in B_L(u,\rho)\}.$$
\end{definition}

\begin{definition}
    A function $f:\mathbb{Z}_m^k\to Im(f)$ is said to be a locally $(\lambda,\rho)_L$-function if $\mid B_L^f(u,\rho)\mid\leq \lambda$ for all $u\in \mathbb{Z}_m^k$.
\end{definition}
When $\lambda=2$, we say the function is a locally binary Lee function.
As the optimal redundancy can be obtained by analyzing those message words for which function values are different.

Let $P$ be a nonempty set and let $\prec$ be a binary relation on $P$. We say that $\prec$ defines a total ordering on $P$ if the following conditions hold:
\begin{enumerate}
    \item for any $a,b \in P$, $a \prec b$ and $b \prec a$ imply $a=b$;
    \item whenever $a,b,c \in P$ satisfy $a \prec b$ and $b \prec c$, it follows that $a \prec c$;
    \item for all $a,b \in P$, the elements $a$ and $b$ are comparable, that is, either $a \prec b$, $b \prec a$, or $a=b$.
\end{enumerate}
A pair $(P,\prec)$ satisfying these properties is called a totally ordered set.
A subset $I \subseteq P$ is called a contiguous block if, for any two elements $a,b \in I$ with $a \prec b$, any $c \in P$ with $a \prec c \prec b$ is also contained in $I$.
 The following lemma distinguishes the message words inside a function ball. The proof follows a similar approach to that of the Hamming distance \cite{Rajput2025}. Here, we provide the proof for completeness.

\begin{lemma}\label{tau}
Let $f:\mathbb{F}_2^k\to Im(f)$ be a locally $(\lambda,\rho)_L$-function. Suppose there is a total order on the image set $\text{Im}(f)$ such that $B_L^f(u,\rho)$ form a contiguous block for all $u\in \mathbb{Z}_m^k$. Then there exists a map $\tau_f^L:\mathbb{F}_2^k\to [\lambda]$ such that for all $u,v\in\mathbb{F}_2^k$,  if $f(u)\neq f(v)$ and $d_L(u,v)\leq \rho$, then  $\tau_f^L(u)\neq \tau_f^L(v)$. 
\end{lemma}
\begin{proof}
 Let $(\text{Im}(f),\prec)$ be a total order set and $B_L^f(u,\rho)$ form a contiguous block for all $u\in \mathbb{Z}_m^k$. Denote  $\text{Im}(f)=\{e_1\prec e_2\prec \dots, \prec e_\eta\}$ be the image set of $f$ ordered by $\prec$. Define a map $\gamma: \text{Im}(f)\to [\lambda]$ as $\gamma(e_j)=1+(j \pmod{\lambda})$.  Observe that for any contiguous block $I\subseteq \text{Im}(f)$ with $|I|\leq \lambda $, $\gamma$ is one-one on $I$.  Define map $\tau_f^L:\mathbb{Z}_m^k\to [\lambda]$, $\tau_f^L(u)=\gamma(f(u))$. For $u,v\in \mathbb{Z}_m^k$ with $f(u)\neq f(v)$ and $d_L(u,v)\leq \rho$, that is, $f(u),f(v)\in B_L^f(u,\rho)$. Since $|B_L^f(u,\rho)|\leq \lambda$, $\gamma$ is one-one on contiguous block $|B_L^f(u,\rho)|$. Thus $\tau_f^L(u)=\gamma(f(u))\neq \gamma(f(v))=
\tau_f^L(v)$. 
\end{proof}
In the following proposition, we show that the contiguous block condition in Lemma \ref{tau} is always satisfied for the Lee weight function over $\mathbb{Z}_m^k$. Similarly, we can prove for the Lee weight distribution function as well.

\begin{proposition}
  Let $f(u)=w_L(u)$ be the Lee weight function over $\mathbb{Z}_m^k$ and $\rho$ be a positive integer. Then $B_L^f(u,\rho)$ forms a contiguous block for all $u\in \mathbb{Z}_m^k$.  
\end{proposition}
\begin{proof} 
Let $f(u)=w_L(u)$ be Lee weight function over $\mathbb{Z}_m^k$. As $\text{Im}(f)\subset\mathbb{Z}$, let $\leq$ be  the natural order on $\text{Im}(f)$ induced from $\mathbb{Z}$. 
 Define a graph $G$ on $\mathbb{Z}_m^k$, $u,v\in \mathbb{Z}_m^k$ is connected if and only if $d_L(u,v)=1$. Trivially, for every $u\in \mathbb{Z}_m^k$, there exists $v\in \mathbb{Z}_m$ such that $d_L(u,v)=1$. Thus $G$ is connected. 
\\ \ \ \ For $u\in \mathbb{Z}_m^k$, let $B_L^f(u,\rho)\subseteq \text{Im}(f)$ be the Lee weight function ball of radius $\rho$ and centered at $u$. For $a,b\in B_L^f(u,\rho) $, let $w_3\in \text{Im}(f)$ with $a\leq c\leq b$. We claim that $c\in B_L^f(u,\rho)$, that is, $B_L^f(u,\rho)$ forms a contiguous block.
There exist $u_1,u_2\in \mathbb{Z}_m^k$ such that $w_L(u_1)=a$ and $w_L(u_2)=b$. Taking the shortest paths on $G$ from $u_1$ to $u$ and $u$ to $u_2$, we get
\begin{align*}
u_1=x_0,x_1,\dots,x_{d_1}=u=y_0,y_1,\dots,y_{d_2}=u_2   
\end{align*}
with $d_L(x_i,x_{i+1})=1$ and $d_L(y_j,y_{j+1})=1$. By triangle inequality,  $d_L(u,x_i)\leq i\leq d_L(u,u_1)\leq \rho$, and $d_L(u,y_j)\leq j\le  d_L(u,u_2)\leq \rho$, that is $x_i,y_j\in B_L(u,\rho)$ for all $0\leq i\leq d_1$ and $0\leq j\leq d_2$. 
Rewrite the sequence as $u_1=v_0,v_1,\dots,v_n=u_2$. 
Then, we have  $|w_L(v_i)-w_L(v_{i+1})|\leq 1$, implies that $a=w_L(v_0),w_L(v_1),\dots,w_L(v_n)=b$ changes by at most one at each step.  This implies that every integer between $a$ and $b$ will be attained. To see, let there is a $c$ with $a<c<b$ and there is no $v_j$ such that $w_L(v_j)=c$. There are some $v_i$ and $v_j$ such that $w_L(v_i)<c$ and $w_L(v_j)>c$. Let $i'=\max\{i| \ w_L(v_i)<c\}$. Then $w_L(v_{i'+1})>c$, and consequently, $|w_L(v_i')-w_L(v_{i'+1})|\geq 2$, which a contradiction.
Thus, there is some $v_i$ such that $f(v_i)=c$ whenever $a\leq c\leq b$. This completes the proof. 
\end{proof}

Using Lemma \ref{tau}, we construct an  $(f,t)$-FCLMC for locally $(\lambda,2t)_L$-functions. This will give an upper bound on the optimal redundancy in terms of the existence of classical codes in the Lee metric with prescribed parameters depending on $\lambda$ and $t$.

\begin{theorem}\label{thmN2t}
    Let $f:\mathbb{Z}_{m}^k\to Im(f)$ be a locally $(\lambda,2t)_L$-function satisfying the contiguous block condition in Lemma \ref{tau}. Then, an upper bound on the optimal redundancy of an $(f,t)$-FCLMC is given by 
    $$r_f^L(k,t)\leq N_L(\lambda,2t).$$
\end{theorem}
\begin{proof}
By Lemma \ref{tau}, there exists a map $\tau_f^L:\mathbb{Z}_{m}^k\to [\lambda]$ such that  for $u,v\in \mathbb{F}_2^k$, $\tau_f^L(u)\neq \tau_f^L(v)$ whenever $f(u)\neq f(v)$ with $d_L(u,v)\leq 2t$. Let $C=\{C_1,C_2,\dots,C_{\lambda}\}$ be a code of length $N_L(\lambda,2t)$ with $\lambda$ codewords and minimum Lee distance $2t$. Define an encoding $\mathcal{E}:\mathbb{Z}_{m}^k\to \mathbb{Z}_{m}^{k+N_L(\lambda,2t)}$ given by $\mathcal{E}(u)=(u,C_{\tau_f^L(u)})$. We claim that this encoding defines an $(f,t)$-FCLMC. Let $u,v\in \mathbb{Z}_{m}^k$ with $f(u)\neq f(v)$. Then \\
\textbf{Case 1}: If $d_L(u,v)\geq 2t+1$, then 
$$d_L(\mathcal{E}(u),\mathcal{E}(v))\geq d_L(u,v)\geq 2t+1.$$\\
\textbf{Case 2}: If $d_L(u,v)\leq 2t$, then $\tau_f^L(u)\neq \tau_f^L(v)$. Thus $C_{\tau_f^L(u)}\neq C_{\tau_f^L(v)}$.  We have, 
\begin{align*}
    \begin{split}
d_L(\mathcal{E}(u),\mathcal{E}(v))=d_L(u,v)+d_L(C_{\tau_f^L(u)},C_{\tau_f^L(v)}) \geq 2t+1.   
    \end{split}
\end{align*}
Hence $r_f^L(k,t)\leq N_L(\lambda,2t)$.
\end{proof}
The following corollary shows that the optimal redundancy of FCLMCs is achieved for locally binary Lee functions.

\begin{corollary}\label{cor-redundancy 2-locally}
    Let $f:\mathbb{Z}_m^k\to Im(f)$ be a locally $(2,2t)_L$-function. Then $$r_f^L(k,t)=\left \lceil\frac{2t}{\left \lfloor \frac{m}{2}\right \rfloor} \right \rceil.$$
\end{corollary}
\begin{proof}
 Using Theorem \ref{thmN2t}, Theorem \ref{redun}, and Proposition \ref{nl2t}, we get the desired result.
\end{proof}



\section{ FCLMCs for the Lee weight and Lee weight distribution functions}\label{leeweight}
In this section, we explore the locality and optimal redundancy of FCLMCs for the Lee weight and Lee weight distribution functions. 
    Note that the Lee weight function and Lee weight distribution function satisfy the contiguous block condition in Lemma \ref{tau} with respect to the order induced from the set of integers $\mathbb{Z}$. Therefore, we can apply Theorem \ref{thmN2t} to establish bounds on optimal redundancy of FCLMCs for the Lee weight and Lee weight distribution functions. In addition, Proposition \ref{prop-l2t} gives a construction of FCLMCs for these functions, attaining the corresponding bounds.

Let $f(u)=w_L(u)$ for all $u\in \mathbb{Z}_m^k$ be Lee wight function. Recall that 
$$B_L(u,\rho)=\{v\in\mathbb{Z}_m^k\mid d_L(u,v)\leq \rho\}.$$

Let $v\in B_L(u,\rho)$. Then by the triangle inequality, we have
\begin{align*}
    \begin{split}
        w_L(v)=d_L(v,0)\leq d_L(v,u)+d_L(u,0)\leq \rho+w_L(u)
    \end{split}
\end{align*}
and 
\begin{align*}
    \begin{split}
        w_L(u)=d_L(u,0)\leq d_L(u,v)+d_L(v,0)\leq \rho+w_L(v).
    \end{split}
\end{align*}
Therefore, for any $v\in B_L(u,\rho)$, we have 
\begin{equation} \label{leew}
    w_L(u)-\rho \leq w_L(v)\leq w_L(u)+\rho.
\end{equation} 
This observation gives the following.
\begin{theorem}\label{thmwl}
 Let $\rho$ be a positive integer. Then the Lee weight function is a locally $(2\rho+1,\rho)_L$-function.    
\end{theorem}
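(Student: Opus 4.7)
The plan is to apply the inequality already derived just above the statement, namely inequality (\ref{leew}), and then count integer values. Concretely, given the definition of a locally $(\lambda,\rho)_L$-function, what I need to establish is that for every $u\in\mathbb{Z}_m^k$, the set $B_L^f(u,\rho)=\{w_L(v)\mid v\in B_L(u,\rho)\}$ has cardinality at most $2\rho+1$.

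First I would fix an arbitrary $u\in\mathbb{Z}_m^k$. By inequality (\ref{leew}), every $v\in B_L(u,\rho)$ satisfies
\[
w_L(u)-\rho \leq w_L(v)\leq w_L(u)+\rho.
\]
Thus $B_L^f(u,\rho)$ is contained in the interval $[w_L(u)-\rho,\, w_L(u)+\rho]$.

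Next I would use the fact that $w_L$ takes values in $\mathbb{Z}_{\geq 0}$, so that $B_L^f(u,\rho)$ consists of integers lying in an interval of length $2\rho$. The number of integers in such an interval is at most $2\rho+1$, which yields $|B_L^f(u,\rho)|\leq 2\rho+1$. Since $u$ was arbitrary, the Lee weight function is a locally $(2\rho+1,\rho)_L$-function, completing the proof.

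There is no real obstacle here; the entire content of the argument is the two-sided bound (\ref{leew}), which has already been established by the triangle inequality in the paragraph preceding the statement. The only minor point worth noting (but not elaborating) is that although the lower bound $w_L(u)-\rho$ can be negative while Lee weights are nonnegative, the count $2\rho+1$ remains a valid upper bound in that case as well.
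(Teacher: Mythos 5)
Your proof is correct and follows exactly the paper's argument: the paper likewise derives the bound $w_L(u)-\rho \leq w_L(v)\leq w_L(u)+\rho$ from the triangle inequality and concludes $|B_L^f(u,\rho)|\leq 2\rho+1$ by counting integers in that interval. Your additional remark about the lower endpoint possibly being negative is a harmless refinement the paper does not bother to state.
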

\begin{proof}
   Let $f(u)=w_L(u)$. By Equation \ref{leew}, $\mid B_L^f(u,\rho)\mid\leq 2\rho+1$ for all $u\in \mathbb{Z}_m^k$.
\end{proof}

\begin{corollary}
    The optimal redundancy of an $(f,t)$-FCLMC for the Lee weight function $f(u)=w_L(u)$ satisfies $$r_f^L(k,t)\leq  (4t+1)\left \lceil \frac{ t}{\left \lfloor\frac{m}{2}\right \rfloor}\right \rceil.$$
\end{corollary}
\begin{proof}
By Theorem \ref{thmwl}, the Lee weight function $w_L(u)$ is a locally $(4t+1,2t)_L$-function. The rest follows from Theorem \ref{thmN2t} and Proposition \ref{prop-l2t}.
\end{proof}

We know that the maximum possible Lee weight of an element in $\mathbb{Z}_m$ can be $\left\lfloor\frac{m}{2} \right\rfloor$. In fact, for  any $u\in \mathbb{Z}_m^k$, the Lee weight of $u$, $w_L(u)\in\{0,1,\dots,k\left\lfloor\frac{m}{2}\right \rfloor\}=Im(w_L)$. It is easy to see that for any $i,j\in \{0,1,2,\dots,k\}=Im(w_H)$, where $w_H$ is the Hamming weight, there exist $u_i=(1^i,0^{k-i}),u_j=(1^j,0^{k-j})\in \mathbb{Z}_m^k$ such that $w_H(u_i)=i$, $w_H(u_j)=j$ and $d_H(u_i,u_j)=|i-j|$. In the Lee metric, finding such elements is non-trivial. In the following lemma, we construct such elements for the Lee metric.  

\begin{lemma}\label{ij}
 For all $i,j\in \{0,1,\dots,k\left\lfloor\frac{m}{2}\right \rfloor\}=Im(w_L)$, there exist $u_i,u_j\in \mathbb{Z}_m^k$ such that $w_L(u_i)=i$, $w_L(u_j)=j$ and $d_L(u_i,u_j)=|i-j|$.   
\end{lemma}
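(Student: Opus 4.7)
The plan is to give a direct constructive proof. Let $L = \lfloor m/2 \rfloor$. For each $i \in \{0, 1, \dots, kL\}$, write $i = q_i L + s_i$ via Euclidean division with $0 \le s_i < L$ (taking $q_i = k, s_i = 0$ when $i = kL$). I will define
$$u_i = (\underbrace{L, L, \dots, L}_{q_i \text{ coordinates}},\ s_i,\ 0, 0, \dots, 0) \in \mathbb{Z}_m^k,$$
where the entry $s_i$ is omitted when $q_i = k$. The two things to verify are that $w_L(u_i) = i$ and that $d_L(u_i, u_j) = |i - j|$.

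For the weight, I will use the alternative description (\ref{alternatelee}): since $L \le m/2$ we have $w_L(L) = L$ (both for $m$ even, where $L = m/2$, and for $m$ odd, where $L = (m-1)/2 < m/2$), and since $s_i < L \le m/2$ we have $w_L(s_i) = s_i$. Summing coordinate-wise gives $w_L(u_i) = q_i L + s_i = i$.

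For the distance, assume $i \le j$ without loss of generality and split into two cases. If $q_i = q_j$, then $u_i$ and $u_j$ agree on all but possibly the $(q_i+1)$-th coordinate, where they differ by $|s_i - s_j|$; since both $s_i, s_j < L \le m/2$ the Lee weight of this difference equals $|s_i - s_j| = |i - j|$. If $q_i < q_j$, I compute the contributions position by position: the first $q_i$ coordinates agree (contribution $0$); coordinate $q_i+1$ contributes $w_L(L - s_i) = L - s_i$ (using $L - s_i \le L \le m/2$); coordinates $q_i + 2, \dots, q_j$ each contribute $L$; coordinate $q_j + 1$ (if present) contributes $s_j$; the remaining coordinates agree. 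Summing,
$$d_L(u_i, u_j) = (L - s_i) + (q_j - q_i - 1) L + s_j = (q_j L + s_j) - (q_i L + s_i) = j - i,$$
as desired. The boundary case $q_j = k$ is handled identically since then $s_j = 0$, so the missing "coordinate $q_j + 1$" simply contributes $0$.

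The only subtlety, and the step I would be careful about, is confirming that every single-coordinate difference I encounter (namely $L$, $s_i$, $s_j$, $s_i - s_j$, and $L - s_i$) actually has Lee weight equal to its absolute value rather than its $m$-complement; this is exactly why the construction uses $L = \lfloor m/2 \rfloor$ as the largest allowed entry, and why the $s_i$ are chosen in $[0, L)$. No global obstruction arises because all intermediate quantities stay in $[0, L] \subseteq [0, m/2]$, the range on which $w_L$ acts as the identity.
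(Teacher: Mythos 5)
Your proof is correct and is essentially the paper's own argument: the vectors you construct via Euclidean division $i = q_iL + s_i$ coincide exactly with the paper's $u_i$ built from its classes $\mathcal{CL}_a$, and both verifications reduce to the same coordinate-wise observation that every difference encountered lies in $[0,\lfloor m/2\rfloor]$, where $w_L$ agrees with the absolute value. Your case split on $q_i=q_j$ versus $q_i<q_j$ mirrors the paper's Case 1/Case 2, so there is nothing substantively different to report.
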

\begin{proof}
    Define 
$\mathcal{CL}_0=\{0\}$ and for $1\leq a\leq k$
$$\mathcal{CL}_a=\left\{i\in \mathbb{N}\mid (a-1)\left \lfloor \frac{m}{2}\right \rfloor < i\leq a \left \lfloor \frac{m}{2}\right \rfloor\right\}.$$
Observe that $$\bigcup_{0\leq a\leq k}\mathcal{CL}_a=Im(w_L) .$$
Also, let $u_0=(0,0,\dots,0)\in \mathbb{Z}_m^k$. For $1\leq i\leq k\left \lfloor \frac{m}{2}\right \rfloor$, if $i\in \mathcal{CL}_a$, that is, $(a-1)\left \lfloor \frac{m}{2}\right \rfloor < i\leq a \left \lfloor \frac{m}{2}\right \rfloor$, then

$$u_i=\left (\left\lfloor\frac{m}{2}\right \rfloor, \left\lfloor\frac{m}{2}\right \rfloor, \dots, \left\lfloor\frac{m}{2}\right \rfloor, i-(a-1)\left\lfloor\frac{m}{2}\right \rfloor (a^{\mathrm{th}}\text{-place}), 0, 0,\dots,0  \right)\in \mathbb{Z}_m^k.$$

Since $(a-1)\left \lfloor \frac{m}{2}\right \rfloor < i\leq a \left \lfloor \frac{m}{2}\right \rfloor$, by Eq. \ref{alternatelee}, we have
 $$w_L\left (i-(a-1)\left\lfloor\frac{m}{2}\right \rfloor\right )=i-(a-1)\left\lfloor\frac{m}{2}\right \rfloor.$$ 

Consequently,
$$w_L(u_i)=
    (a-1)\left\lfloor\frac{m}{2}\right \rfloor+i-(a-1)\left\lfloor\frac{m}{2}\right \rfloor=i.$$

Now, we show that $d_L(u_i,u_j)=|i-j|$.  For this,\\
\textbf{Case 1}: Let $i,j\in \mathcal{CL}_a$ for some $a$. Without loss of generality, assume that $j> i$. Then
\begin{align*}
    d_L(u_i,u_j)=&w_L\left (0,0,\dots, i-(a-1)\left\lfloor\frac{m}{2}\right \rfloor-(j-(a-1)\left\lfloor\frac{m}{2}\right \rfloor   (a^{\mathrm{th}}\text{-place}),0,\dots,0 \right)\\
    =&w_L(0,0,\dots,i-j (a^{\mathrm{th}}\text{-place}),0,\dots,0)\\
    =&w_L(i-j).
\end{align*} 
Observe that $-\left\lfloor\frac{m}{2}\right \rfloor\leq i-j\leq \left\lfloor\frac{m}{2}\right \rfloor$, thus by Eq.\ref{alternatelee}, $w_L(i-j)=|i-j|$.
Hence $d_L(u_i,u_j)=|i-j|.$

\textbf{Case 2}: Let $i\in \mathcal{CL}_a$ and $j\in \mathcal{CL}_b$ for some $a\neq b$. Without loss of generality, let $j>i$, then $b>a$ and 
\small \begin{align*}
     &u_i-u_j=\\
     &\left (0,0,\dots,0, i-(a-1)\left\lfloor\frac{m}{2}\right \rfloor-\left\lfloor\frac{m}{2}\right \rfloor (a^{\mathrm{th}}\text{-place}), -\left\lfloor\frac{m}{2}\right \rfloor,\dots,-\left\lfloor\frac{m}{2}\right \rfloor,-j+(b-1)\left\lfloor\frac{m}{2}\right \rfloor(b^{\mathrm{th}}\text{-place}),0,\dots,0  \right).
 \end{align*}
 Thus 
 \begin{align*}
     d_L(u_i,u_j)=&w_L(u_i-u_j)= w_L\left (i-a\left\lfloor\frac{m}{2}\right \rfloor\right)+ (b-a-1)w_L\left (-\left\lfloor\frac{m}{2}\right \rfloor\right )+w_L\left (-j+(b-1)\left\lfloor\frac{m}{2}\right \rfloor\right ).    
 \end{align*}
 Note that $(a-1)\left\lfloor\frac{m}{2}\right \rfloor<i\leq a\left\lfloor\frac{m}{2}\right \rfloor$. This  implies that $-\left\lfloor\frac{m}{2}\right \rfloor\leq i-a\left\lfloor\frac{m}{2}\right \rfloor\leq0\leq \left\lfloor\frac{m}{2}\right \rfloor$. Therefore, again by Eq. \ref{alternatelee}, $$ w_L(i-a\left\lfloor\frac{m}{2}\right \rfloor)= \left | i-a\left\lfloor\frac{m}{2}\right \rfloor \right |=-i+a\left\lfloor\frac{m}{2}\right \rfloor$$

 \begin{align*}
     d_L(u_i,u_j)=&w_L(u_i-u_j)= w_L\left (i-a\left\lfloor\frac{m}{2}\right \rfloor\right )+ (b-a-1)w_L\left (\left\lfloor\frac{m}{2}\right \rfloor\right )+w_L\left (j-(b-1)\left\lfloor\frac{m}{2}\right \rfloor\right )\\
     =&-i+a\left\lfloor\frac{m}{2}\right \rfloor+b\left\lfloor\frac{m}{2}\right \rfloor-a\left\lfloor\frac{m}{2}\right \rfloor-\left\lfloor\frac{m}{2}\right \rfloor+j-b\left\lfloor\frac{m}{2}\right \rfloor+\left\lfloor\frac{m}{2}\right \rfloor\\
     =&-i+j=|i-j|\ \ \ \text{ (since }  i-j<0, |i-j|=-(i-j)=-i+j ).
 \end{align*}
 This completes the proof.
\end{proof}

Using the above lemma, we have the following.

\begin{theorem}\label{leeweghtnd}
    Let $f(u)=w_L(u)$ be the Lee weight function with $Im(w_L)=\{0,1,\dots,k\left\lfloor\frac{m}{2}\right \rfloor=\eta\}$. Let $D_{w_L}(t)$ be an $(\eta+1)\times (\eta+1)$ matrix with $(i,j)$-th entry
    $$[D_{w_L}(t)]_{ij}=\begin{cases}
        0 & \text{ if } i=j\\
 [2t+1-|i-j|]^+ & \text{ if } i\neq j.
 \end{cases}$$
 Then $r_f^L(k,t)=N_L(D_{w_L}(t))$.
\end{theorem}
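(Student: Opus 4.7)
The strategy is to invoke the corollary that follows Theorem \ref{imgredun}, which states that $r_f^L(k,t)=N_L(D_f^L(t,e_1,\dots,e_\eta))$ whenever there exist representatives $u_1,\dots,u_\eta$ with $f(u_i)=e_i$ such that $D_f^L(t,u_1,\dots,u_\eta)=D_f^L(t,e_1,\dots,e_\eta)$. The task reduces to (i) identifying the function distance matrix for the Lee weight function and (ii) exhibiting representatives that realise it, and then verifying that the matrix obtained equals $D_{w_L}(t)$ defined in the statement.

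For (i), label the image as $e_i=i$ for $i\in\{0,1,\dots,\eta\}$. I would first show that the Lee distance on $Im(w_L)$ (in the sense of Definition \ref{fncdistance}) satisfies $d_L(e_i,e_j)=|i-j|$. The inequality $d_L(e_i,e_j)\geq|i-j|$ is immediate from the triangle inequality $|w_L(u)-w_L(v)|\leq d_L(u,v)$, exactly as in the derivation of \eqref{leew}. The matching upper bound $d_L(e_i,e_j)\leq|i-j|$ comes from Lemma \ref{ij}, which supplies explicit vectors $u_i,u_j\in\mathbb{Z}_m^k$ with $w_L(u_i)=i$, $w_L(u_j)=j$, and $d_L(u_i,u_j)=|i-j|$. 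Consequently, for all $i\neq j$,
\[
[D_f^L(t,e_0,e_1,\dots,e_\eta)]_{ij}=[2t+1-|i-j|]^+=[D_{w_L}(t)]_{ij}.
\]

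For (ii), take $u_0,u_1,\dots,u_\eta$ to be precisely the vectors constructed in Lemma \ref{ij}. Since $f(u_i)=w_L(u_i)=i\neq j=f(u_j)$ for $i\neq j$ and $d_L(u_i,u_j)=|i-j|$, the definition of the distance requirement matrix gives
\[
[D_f^L(t,u_0,u_1,\dots,u_\eta)]_{ij}=[2t+1-d_L(u_i,u_j)]^+=[2t+1-|i-j|]^+,
\]
i.e.\ $D_f^L(t,u_0,\dots,u_\eta)=D_f^L(t,e_0,\dots,e_\eta)=D_{w_L}(t)$. Applying Theorem \ref{imgredun} gives $r_f^L(k,t)\leq N_L(D_{w_L}(t))$, while Corollary \ref{corndf} applied to the representatives $u_0,\dots,u_\eta$ gives $r_f^L(k,t)\geq N_L(D_f^L(t,u_0,\dots,u_\eta))=N_L(D_{w_L}(t))$. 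Combining the two yields the equality.

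The only nontrivial ingredient is the existence of the representatives $u_i$ realising the minimum distance $|i-j|$ simultaneously for every pair $(i,j)$; this is exactly what Lemma \ref{ij} was designed to provide, so once that lemma is in hand the argument is short and essentially bookkeeping. I would therefore expect no serious obstacle, merely the care of writing out the identification of $D_f^L(t,e_0,\dots,e_\eta)$ with $D_{w_L}(t)$ and citing the two earlier results correctly.
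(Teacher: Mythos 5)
Your proposal is correct and follows essentially the same route as the paper: establish $d_L(e_i,e_j)=|i-j|$ via the triangle inequality together with the explicit representatives from Lemma \ref{ij}, then obtain the upper bound from Theorem \ref{imgredun} and the matching lower bound from Corollary \ref{corndf} applied to those same representatives. No gaps.
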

\begin{proof}
 Let $e_i=i$ for all $i\in Im(w_L)$.  From Definition \ref{fncdistance}, we have 
    $$d_L(e_i,e_j)=\min_{u_1,u_2\in\mathbb{Z}_m^k}\{d_L(u_1,u_2) |\  f(u_1)=e_i, f(u_2)=e_j\}.$$ 
    Using the triangle inequality, $d_L(e_i,e_j)\geq |i-j|$. Also, by Lemma  \ref{ij}, there exist $u_1,u_2\in \mathbb{Z}_m^k$, such that $w_L(u_1)=i$, $w_L(u_2)=j$ and $d_L(u_i,u_j)=|i-j|$. Hence $d_L(e_i,e_j)=|i-j|$. By Theorem \ref{imgredun}, we have
    $$r_f^L(k,t)\leq N_L(D_{w_L}(t)).$$
    Conversely, for all $i\in \left \{0,1,\dots,k\left\lfloor\frac{m}{2}\right \rfloor\right \}$, by Lemma \ref{ij}, there exits $u_i\in \mathbb{Z}_m^k$ with $w_L(u_i)=i$ such that $d_L(u_i,u_j)=|i-j|$. Then by Corollary \ref{corndf}, $r_f^L(k,t)\geq N_L(D_{w_L}(t))$.    
\end{proof}

\begin{example}
  Let $m=4$, $t=2$ and $k=4$. Then $\eta=k\left \lfloor\frac{m}{2}\right \rfloor=8$ and $D_{w_L}(2)$ is a $9\times 9$ matrix given by
  $$\begin{bmatrix}
  
      0 &4&3&2&1&0&0&0&0\\
      4&0&4&3&2&1&0&0&0\\
      3&4&0&4&3&2&1&0&0\\
      2&3&4&0&4&3&2&1&0\\
      1&2&3&4&0&4&3&2&1\\
      0&1&2&3&4&0&4&3&2\\
      0&0&1&2&3&4&0&4&3\\
      0&0&0&1&2&3&4&0&4\\
      0&0&0&0&1&2&3&4&0
  \end{bmatrix}.$$
\end{example}
Using Theorem \ref{leeweghtnd} and Lemma \ref{plotkin}, we get a lower bound on the optimal redundancy of FCLMCs for the Lee weight function.
\begin{corollary}\label{wlower}
Let $f(u)=w_L(u)$ and $\eta=k\left \lfloor \frac{m}{2}\right \rfloor$. For any $k\geq\left \lceil \frac{t+1}{\left \lfloor \frac{m}{2}\right \rfloor}\right \rceil$,
$$r_f^L(k,t)\geq \left \lceil \frac{10t^3+30t^2+20t}{3(t+2)^2\left \lfloor \frac{m}{2}\right \rfloor} \right \rceil.$$
\end{corollary}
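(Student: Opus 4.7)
The plan is to combine Theorem \ref{leeweghtnd} with the Plotkin-type bound in Lemma \ref{plotkin}, applied not to the whole $(\eta+1)\times(\eta+1)$ matrix $D_{w_L}(t)$ but to a carefully chosen principal submatrix. Using the full matrix would give a bound that weakens as $k$ (and therefore $\eta$) grows; restricting to a small window of consecutive weights both removes the $k$-dependence and keeps every entry large, which is what one wants in a Plotkin-type argument.

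My first step will be to fix the right window size: I will take $M = t+2$ consecutive weights $0, 1, \ldots, t+1$. Under the hypothesis $k > \lceil(t+1)/\lfloor m/2\rfloor\rceil$ we have $t+1 \le k\lfloor m/2\rfloor = \eta$, so by Lemma \ref{ij} there exist $u_0, u_1, \ldots, u_{t+1} \in \mathbb{Z}_m^k$ with $w_L(u_i) = i$ and $d_L(u_i,u_j) = |i-j|$. Corollary \ref{corndf} then gives
$$r_f^L(k,t) \;\ge\; N_L(D'),$$
where $D'$ is the $(t+2)\times(t+2)$ principal submatrix of $D_{w_L}(t)$ with entries $[D']_{ij} = 2t+1 - |i-j|$ for $i\ne j$; note that no clipping by $[\,\cdot\,]^+$ occurs because $|i-j|\le t+1 \le 2t$.

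Next, I will apply Lemma \ref{plotkin} to $D'$. Writing $d = j-i$, the sum
$$S \;=\; \sum_{0\le i<j\le t+1}[D']_{ij} \;=\; \sum_{d=1}^{t+1}(t+2-d)(2t+1-d)$$
collapses, after the substitution $e = t+1-d$ and a routine evaluation of $\sum e^2$ and $\sum e$, to $S = \tfrac{5\,t(t+1)(t+2)}{6}$. Lemma \ref{plotkin} yields $N_L(D')\ge \tfrac{4S}{M^2\lfloor m/2\rfloor}$ when $M$ is even, and $N_L(D')\ge \tfrac{4S}{(M^2-1)\lfloor m/2\rfloor}\ge \tfrac{4S}{M^2\lfloor m/2\rfloor}$ when $M$ is odd. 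In either parity of $t+2$ I obtain
$$r_f^L(k,t) \;\ge\; \frac{4}{(t+2)^2\lfloor m/2\rfloor}\cdot\frac{5\,t(t+1)(t+2)}{6} \;=\; \frac{10\,t^3+30\,t^2+20\,t}{3(t+2)^2\lfloor m/2\rfloor}.$$

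The only genuinely non-routine point is recognizing that $M=t+2$ is the correct window size; natural alternatives such as $M=2t+1$ or $M=2t+2$ give asymptotically weaker bounds (roughly $8t/(3\lfloor m/2\rfloor)$ instead of $10t/(3\lfloor m/2\rfloor)$). Once the window is fixed, the evaluation of $S$ and the even/odd case split in Lemma \ref{plotkin} are mechanical.
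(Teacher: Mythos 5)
Your proposal is correct and follows essentially the same route as the paper: restrict attention to the $(t+2)\times(t+2)$ principal submatrix of $D_{w_L}(t)$ corresponding to the consecutive weights $0,1,\dots,t+1$, apply the Plotkin-type bound of Lemma \ref{plotkin}, and evaluate the resulting sum to $\tfrac{5t(t+1)(t+2)}{6}$. Your treatment is in fact slightly more careful than the paper's on two minor points — justifying the passage to the submatrix via Lemma \ref{ij} and Corollary \ref{corndf}, and handling the even/odd cases of $M=t+2$ explicitly — but the underlying argument is identical.
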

\begin{proof}
 Let $\{C_1,C_2,\dots,C_{\eta+1}\}$ be a $D_{w_L}(t)$-code. Consider first $t+2$ codewords $C_1,C_2,\dots,C_{t+2}$.  By Lemma \ref{plotkin} and Theorem \ref{leeweghtnd}, 
\begin{align*}
    r_f^L(k,t)=&N_L(D_{w_L}(t))\geq  \frac{4}{(t+2)^2\left \lfloor \frac{m}{2}\right \rfloor} \sum_{1\leq i,j\leq t+2; i<j}[D_{w_L}(t)]_{ij}\\
    =&\frac{4}{(t+2)^2\left \lfloor \frac{m}{2}\right \rfloor} \sum_{i=1}^{t+2}\sum_{j=i+1}^{t+2}[D_{w_L}(t)]_{ij}\\
    =&\frac{4}{(t+2)^2\left \lfloor \frac{m}{2}\right \rfloor} \left (\sum_{i=0}^{t}(2t-i)(t+1-i)\right )\\
    =& \frac{4}{(t+2)^2\left \lfloor \frac{m}{2}\right \rfloor} \frac{5t^3+15t^2+10t}{6}\\
    =& \frac{10t^3+30t^2+20t}{3(t+2)^2\left \lfloor \frac{m}{2}\right \rfloor}.
\end{align*} 
 Since $r_f^L(k,t)$ is an integer, this completes the proof.
\end{proof}

\begin{remark}
    Let $f:\mathbb{Z}_4^k\to Im(f)$ be the Lee weight function. Then by Corollary \ref{wlower}, for any $k\geq \left \lceil \frac{t+1}{2}\right \rceil$,
$$r_f^L(k,t)\geq \frac{5t^3+15t^2+10t}{3(t+2)^2}.$$
In \cite[Corollary 4.2]{liu2025}, the authors proved the same bound for the homogeneous metric. As the homogeneous metric and the Lee metric are the same over $\mathbb{Z}_4$. However, Corollary \ref{wlower} gives the bound over $\mathbb{Z}_m$ for any positive integer $m$ with respect to the Lee metric. 
\end{remark}

For $a, b\in \mathbb{Z}$, define shifted modulo \cite{Lenz2023} as follows
\begin{align*}
    a \text{ smod }b \equiv (a-1)\pmod{b}+1\in \{1,2,\dots,b\}.
\end{align*}
For example, $4  \text{ smod }4 \equiv 4$ and $5 \text{ smod } 4 \equiv 1$.  

Now, we give a construction of FCLMCs for the Lee weight function over $\mathbb{Z}_m^k$. that attains the optimal redundancy in Corollary \ref{wlower} for several cases.\\

\noindent
\textbf{Construction of FCLMC for the Lee weight function}\\
For $t\geq 1$, let $p_1,p_2,\dots,p_{2t+1}$ be a code with the minimum Lee distance $2t+1$, that is, $d_L(p_i,p_j)\geq 2t$, and $p_i=p_{i \text{ smod } (2t+1)}$ for $i\geq 2t+2$. Define an encoding $\mathcal{E}$ over $\mathbb{Z}_m^k$ given by $\mathcal{E}(u)=(u,p_{w_L(u)+1})$. For $u,v\in \mathbb{Z}_m^k$ with $w_L(u)\neq w_L(v)$, let $w_L(u)=w_1 \text{ smod } (2t+1)$, $w_L(v)=w_2 \text{ smod } (2t+1)$.\\
\textbf{Case 1} If $|w_L(u)-w_L(v)|\geq 2t+1$, then
\begin{align*}
d_L(\mathcal{E}(u),\mathcal{E}(v))\geq d_L(u,v)=w_L(u-v)\geq |w_L(u)-w_L(v)|\geq 2t+1.
\end{align*}
\textbf{Case 2} If  $|w_L(u)-w_L(v)|< 2t+1$, then $w_1\neq w_2$, that is, $p_{w_L(u)+1}\neq p_{w_L(v)+1}$. Thus, we have
\begin{align*}
d_L(\mathcal{E}(u),\mathcal{E}(v))= d_L(u,v)+d_L(p_{w_L(u)+1}, p_{w_L(v)+1})\geq 2t+1.
\end{align*}
Hence, by Definition \ref{def-fclmc}, encoding $\mathcal{E}$ defines an $(f,t)$-FCLMC for the Lee weight function over $\mathbb{Z}_m^k$.
In the following example, we construct $p_i$'s for various values of $t$.
\begin{example}
(1) For $m\geq 4$ and $t=1$, let $p_1=(0,0)$, $p_2=(1,1)$, $p_3=\left ( \left \lfloor \frac{m}{2} \right \rfloor,0 \right )$, and $p_i=p_{i \text{ smod } 3}$ for $i\geq 4$. Moreover, when $m=4$, the bound in Corollary \ref{wlower} is tight. \\
(2) For $m\geq 5$ and $t=2$, let $p_1=(0,0,0) , p_2=(2,2,0) , p_3=(2,0,2) , p_4=(0,2,2) ,p_5=(3,3,3)$, and $p_i=p_{i \text{ smod } 5}$ for $i\geq 6$. Moreover, when $m=5$, the bound in Corollary \ref{wlower} is tight.
\end{example}

Now, we describe the locality of the Lee weight distribution function. The Lee weight distribution function with a threshold $T>0$ is defined as 
$$W_L^T(u)=\left \lfloor \frac{w_L(u)}{T}\right \rfloor, u\in \mathbb{Z}_m^k.$$ 
The following theorem gives a value of $\lambda$ for which the Lee weight distribution function $W_L^T$ is a locally $(\lambda,\rho)_L$-function for an arbitrary positive integer $\rho$.
\begin{theorem}\label{thmlwd}
  Let $\rho$  be a positive integer. Then the Lee weight distribution function $W_L^T$ over $\mathbb{Z}_m^k$ is a locally $\left (\left \lfloor\frac{2\rho}{T}\right \rfloor + 2,\rho\right )_L$-function.
\end{theorem}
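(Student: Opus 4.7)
The plan is to bound the cardinality of the function ball $B_L^{W_L^T}(u,\rho)$ by controlling which integer values $W_L^T(v) = \lfloor w_L(v)/T\rfloor$ can attain as $v$ varies over the Lee ball $B_L(u,\rho)$. The key input, already available in the excerpt, is Equation (\ref{leew}): for every $v \in B_L(u,\rho)$ one has $w_L(u) - \rho \leq w_L(v) \leq w_L(u) + \rho$. Since the map $x \mapsto \lfloor x/T \rfloor$ is nondecreasing and $w_L(v)\geq 0$, applying it to this double inequality shows that $W_L^T(v)$ lies in the integer interval
\[
I_u := \Big[\max\big(0,\ \lfloor (w_L(u)-\rho)/T\rfloor\big),\ \lfloor (w_L(u)+\rho)/T\rfloor\Big],
\]
so $|B_L^{W_L^T}(u,\rho)| \leq |I_u|$.

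The heart of the proof is then the estimate $|I_u| \leq \lfloor 2\rho/T\rfloor + 2$. In the generic case $w_L(u) \geq \rho$, I would invoke the elementary inequality $\lfloor a \rfloor - \lfloor b \rfloor \leq \lfloor a - b \rfloor + 1$, valid for all real $a \geq b$ (a one-line consequence of $\lfloor a\rfloor \leq a$ and $\lfloor b\rfloor > b-1$, together with integrality of the left side). Taking $a = (w_L(u)+\rho)/T$ and $b = (w_L(u)-\rho)/T$, so that $a - b = 2\rho/T$, this yields
\[
\lfloor (w_L(u)+\rho)/T\rfloor - \lfloor (w_L(u)-\rho)/T\rfloor \leq \lfloor 2\rho/T\rfloor + 1,
\]
and hence $|I_u|$, being one more than this difference of endpoints, is at most $\lfloor 2\rho/T\rfloor + 2$.

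For the degenerate case $w_L(u) < \rho$, the lower endpoint of $I_u$ collapses to $0$, while the upper endpoint satisfies $\lfloor (w_L(u)+\rho)/T\rfloor \leq \lfloor 2\rho/T\rfloor$, so $|I_u| \leq \lfloor 2\rho/T\rfloor + 1 < \lfloor 2\rho/T\rfloor + 2$ and the same bound holds a fortiori. Since $u \in \mathbb{Z}_m^k$ was arbitrary, this shows $W_L^T$ is a locally $(\lfloor 2\rho/T\rfloor + 2,\rho)_L$-function. The only step requiring any care is the floor-difference inequality; the remainder is just monotonicity of the floor combined with the triangle inequality already encapsulated in (\ref{leew}), so I do not anticipate a significant obstacle.
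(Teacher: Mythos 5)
Your proposal is correct and follows essentially the same route as the paper: both apply the monotonicity of the floor function to the weight bounds in Equation (\ref{leew}) and then use the floor-difference inequality $\lfloor a\rfloor-\lfloor b\rfloor\leq\lfloor a-b\rfloor+1$ to obtain the count $\lfloor 2\rho/T\rfloor+2$. Your separate treatment of the case $w_L(u)<\rho$ is a harmless extra refinement (the floor inequality and the counting argument remain valid for negative lower endpoints), not a different method.
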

\begin{proof}
Let $f(u)=W_L^T(u)$. Let $u\in \mathbb{Z}_m^k$, by Equation \ref{leew}, for any $v\in B_L(u,\rho)$, we have 
$$ \left \lfloor \frac{w_L(u)-\rho}{T}\right \rfloor \leq f(v)= \left \lfloor \frac{w_L(v)}{T}\right \rfloor\leq \left \lfloor \frac{w_L(u)+\rho}{T}\right \rfloor.$$
Therefore, $$\mid B_L^f(u,\rho)\mid\leq \left \lfloor \frac{w_L(u)+\rho}{T}\right \rfloor-\left \lfloor \frac{w_L(u)-\rho}{T}\right \rfloor+1\leq \left \lfloor \frac{(w_L(u)+\rho)- (w_L(u)-\rho)}{T}\right \rfloor+2,$$
that is, $ |B_L^f(u,\rho)|\leq \left \lfloor \frac{2\rho}{T}\right \rfloor+2$. Since $u$ is arbitrary, therefore, $W_L^T$ is a locally $\left (\left \lfloor\frac{2\rho}{T}\right \rfloor + 2,\rho\right )_L$-function.
\end{proof}
\begin{corollary}
    The optimal redundancy of an $(f,t)$-FCLMC for the Lee weight distribution function $f(u)=W_L^T(u)$ satisfies
    \begin{align*}
   r_f^L(k,t)=\begin{cases}
       0 & \text{ if } T> k\left \lfloor\frac{m}{2}\right \rfloor\\
       \left \lceil\frac{2t}{\left \lfloor \frac{m}{2}\right \rfloor} \right \rceil & \text{ if } T= k\left \lfloor\frac{m}{2}\right \rfloor,
   \end{cases}\\ 
\text{and for }  T<k\left \lfloor\frac{m}{2}\right \rfloor, \text{ we have }\hfill\\        r_f^L(k,t)\leq \left(\left \lfloor\frac{4t}{T}\right \rfloor + 2\right)\left \lceil \frac{t}{\left \lfloor \frac{m}{2}\right \rfloor}  \right\rceil.
         \end{align*}
Moreover, Proposition \ref{prop-l2t} provides a construction of FCLMC achieving the above bound.         
\end{corollary}
\begin{proof}
  If $T> k\left \lfloor\frac{m}{2}\right \rfloor$, then $W_L^T(u)=0$ for all $u\in \mathbb{Z}_m^k$, that is, $W_L^T$ is a constant function. Hence, no redundancy is required.  If $T= k\left \lfloor\frac{m}{2}\right \rfloor$, then $W_L^T(u)\in \{0,1\}$, that is, $W_L^T$ is $(2,2t)_L$-function. By Corollary \ref{cor-redundancy 2-locally}, we get the desired result.
 For $T< k\left \lfloor\frac{m}{2}\right \rfloor$, by Theorem  \ref{thmlwd}, the Lee weight distribution function $W_L^T(u)$ is a locally $\left (\left \lfloor \frac{4t}{T}\right \rfloor+2,2t\right)
    _L$-function. The rest follows from Theorem \ref{thmN2t} and Proposition \ref{prop-l2t}.
\end{proof}


\section{Plotkin-like bound for linear function-correcting codes in the Lee metric}\label{linearfcc}
In this section, we derive a Plotkin-like bound for a specific class of functions, namely linear functions, in the context of channels measured using the Lee metric. Linear function correcting codes over the Hamming metric are discussed in \cite{Premlal2024}, and later extended to $b$-symbol read channels in \cite{Sampath2025}.

\begin{theorem}\label{plotlinear}
    Let $f:\mathbb{Z}_m^k\to \mathbb{Z}_m^\ell$ be an onto linear function. Then the optimal redundancy of an $(f,t)$-FCLMC is bounded by 
    \begin{align*}
        r_f^L(k,t)\geq \frac{m}{S}(2t+1)(1-m^{-\ell})-k+\frac{s}{Sm^{k-1}},
    \end{align*}
    where $S$ is as in Lemma \ref{sumS} and $s=\sum_{u\in \ker(f)}w_L(u)$.
\end{theorem}
\begin{proof}
  Let $M=m^k$, $\mathbb{Z}_m^k=\{u_1,u_2,\dots,u_M\}$, and $D=D_f^L(t,u_1,u_2,\dots,u_M)$. Let $C=\{C_1,C_2,\dots,C_M\}\subseteq \mathbb{Z}_m^r$ be a $D$-code of length $r$, that is, $d_L(C_i,C_j)\geq [D]_{ij}$. Then, by Lemma \ref{Csum}, we have 
  \begin{equation}\label{eq7.1}
      \frac{SM^2r}{m}\geq\sum_{i,j}d_L(C_i,C_j)\geq \sum_{i,j}[D]_{ij}.
  \end{equation}
Since $f$ is an onto linear function, therefore cardinality of $\ker(f)$ is $m^{k-\ell}$. Thus, each coset of $\ker(f)$ in $\mathbb{Z}_m^k$ contains $m^{k-\ell}$ elements. Observe that if $u_i,u_j$ belong to the same coset, then $f(u_i)=f(u_j)$. Consequently, $[D]_{ij}=0$, that is, each column of $D$ contains at least $m^{k-\ell}$ zeros. When $u_i,u_j$ are in different cosets, then $[D]_{ij}=[2t+1-d_L(u_i,u_j)]^+$. Let $0\neq v\in \mathbb{Z}_m^k$, $u_{i'}=u_i+v$ and $u_{j'}=u_j+v$. Then $d_L(u_i,u_j)=d_L(u_{i'},u_{j'})$. Therefore, each column (row) in $D$ contains the same set of entries, implying that the columns are permutations of one another. Thus, 
\begin{align*}
  \sum_{i,j} [D]_{ij}=(\text{no. of columns)}\times (\text{sum of one column)}.  
\end{align*}
Let $u_1=\bm{0}$ be the zero vector in $\mathbb{Z}_m^k$. We calculate the sum of the entries of the column corresponding to $u_1$. Let $I=\mathbb{Z}_m^k\setminus \ker(f)$. If $u_i\in \ker(f)$, then $[D]_{i1}=0$. For $i\in I$, $[D]_{i1}\geq 2t+1-d_L(u_i,\bm{0})=2t+1-w_L(u_i)$. Thus,
\begin{align*}
   \sum_{i}[D]_{i1}\geq (2t+1)(m^k-m^{k-\ell})-\sum_{i\in I} w_L(u_i). 
\end{align*}
The sum of  Lee weights of all the vectors in $\mathbb{Z}_m^k$ is $kSm^{k-1}$. Thus, \begin{align*}
    \sum_{i\in I}w_L(u_i)=kSm^{k-1}-s, \text{ where } s=\sum_{u\in \ker(f)}w_L(u). 
\end{align*} 
Therefore,
\begin{align}\label{eq7.2}
\begin{split}
    \sum_{i,j} [D]_{ij}&\geq M\cdot  \sum_{i}[D]_{i1}\\
      &\geq m^k((2t+1)(m^k-m^{k-\ell})-kSm^{k-1}+s.
    \end{split}
\end{align}
The desired result follows from Eqs. \ref{eq7.1} and \ref{eq7.2}.
\end{proof}

As stated in Lemma \ref{plotkin}, the Plotkin bound depends on the sum of all $[D]_{ij}$ terms. While in the case of linear functions, this reduces to the sum of the Lee weights of the vectors in $\ker(f)$. In \cite{Premlal2024}, the authors derived a Plotkin-like bound for linear function-correcting codes over the finite field $\mathbb{F}_q$ under the Hamming metric.  As the Lee metric coincides with the Hamming metric over $\mathbb{Z}_2$, our corresponding result in the Hamming metric over $\mathbb{Z}_2$ is consistent with the bound in \cite{Premlal2024}.
\begin{corollary}
 Let $f:\mathbb{Z}_2^k\to \mathbb{Z}_2^\ell$ be an onto linear function. Then the optimal redundancy of an $(f,t)$-FCC is bounded by 
    \begin{align*}
        r_f^H(k,t)\geq 2(2t+1)(1-2^{-\ell})-k+\frac{s}{2^{k-1}},
    \end{align*}
    where $s=\sum_{u\in \ker(f)}w_H(u)$.    
\end{corollary}
\begin{proof}
By Lemma \ref{sumS}, we have $S=1$ for $m=2$. The rest follows from Theorem \ref{plotlinear} by putting $m=2$.    
\end{proof}

 When the function $f$ is bijective, a function-correcting code reduces to a systematic error-correcting code. Particularly, setting $f$ to be a bijective function, that is, $s=0$ and $\ell=k$, the bound presented in Theorem~\ref{plotlinear} simplifies to the Poltkin bound for error-correcting codes in the Lee metric. This aligns precisely with the well-known  Plotkin low-rate average distance bound (see \cite{Wyner1968}). 
 \begin{corollary}
     Let $f:\mathbb{Z}_m^k\to \mathbb{Z}_m^k$ be a bijective linear function. Then the optimal redundancy of an $(f,t)$-FCLMC is bounded by 
  \begin{align*}
     n=r_f^L(k,t)+k\geq \frac{m}{S}(2t+1)\left(\frac{m^k-1}{m^k}\right),
 \end{align*}
    where $S$ is as in Lemma \ref{sumS}.
 \end{corollary}
The following corollary derives the classical Plotkin bound for error-correcting codes over $\mathbb{Z}_2$ in the Hamming metric.
\begin{corollary}
     Let $f:\mathbb{Z}_2^k\to \mathbb{Z}_2^k$ be a bijective linear function. Then the optimal redundancy of an $(f,t)$-FCC is bounded by 
  \begin{align*}
     n=r_f^H(k,t)+k\geq (2t+1)\left(\frac{2^k-1}{2^{k-1}}\right).
 \end{align*}
 \end{corollary}




\section{Conclusion}
 In this paper, we extended the study of function-correction codes for the Lee metric by introducing function-correcting Lee metric codes (FCLMCs) over the residue ring $\mathbb{Z}_m$. FCLMCs require less redundancy compared to classical Lee metric codes. In this encoding, we only need to distinguish those codewords whose associated function values differ, rather than correcting all erroneous codewords. We derived upper and lower bounds on the optimal redundancy of FCLMCs for general functions utilizing the smallest length irregular distance codes in the Lee metric. Moreover, we defined locally bounded functions in the Lee metric, and using general results, we obtained bounds on the optimal redundancy for locally bounded functions. We also showed that the lower bound on optimal redundancy can be attained in the case of locally binary functions.  We further explored the locality of specific functions, namely the Lee weight function and the Lee weight distribution function, and established bounds on the optimal redundancy. Moreover, we provided explicit constructions for these functions. Our results generalize the corresponding bounds over $\mathbb{Z}_2$ given in \cite{Lenz2023} and over $\mathbb{Z}_4$ given in \cite{liu2025}. Additionally, we explored linear FCLMCs and established a Plotkin-like bound for linear function-correcting codes in the Lee metric.

 A natural and nontrivial continuation of this work is the investigation and construction of coding schemes that provide simultaneous protection against errors in both the function outputs and the underlying message data. Developing such codes, capable of jointly safeguarding data and function values in the Lee metric, constitutes a challenging and promising direction for future research, with potential impact on error-resilient computation, storage, and communication systems. Recently, in \cite{Rajput2025fcc-data}, the authors studied such code constructions in the Hamming metric.


\bibliographystyle{abbrv}
	\bibliography{ref}

@article{Rajput2025fcc-data,
  title={Function-Correcting Codes With Data Protection},
  author={Charul Rajput and B. Sundar Rajan and Ragnar Freij and Camilla Hollanti},
  journal={ArXiv},
  year={2025},
  volume={abs/2511.18420},
  url={https://api.semanticscholar.org/CorpusID:283244738}
}

@article{Loeliger1994,
  title={An upper bound on the volume of discrete spheres},
  author={Hans-Andrea Loeliger},
  journal={IEEE Transactions on Information Theory},
  year={1994},
  volume={40},
  pages={2071-2073}
}

@article{Wyner1968,
  title={An Upper Bound on Minimum Distance for a k-ary Code},
  author={Aaron D. Wyner and Ronald L. Graham},
  journal={Inf. Control.},
  year={1968},
  volume={13},
  pages={46-52},
  url={https://api.semanticscholar.org/CorpusID:10123953}
}

@article{satyanarayana1979,
  title={Lee metric codes over integer residue rings},
  author={Satyanarayana, Chandra},
  journal={IEEE Transactions on Information Theory},
  volume={25},
  number={2},
  pages={250--254},
  year={1979},
  publisher={IEEE}
}

@inproceedings{berlekamp1968,
  title={Negacyclic codes for the {L}ee metric},
  author={Berlekamp, Elwyn R},
  booktitle={Proceedings of the Conference on Combinatorial Mathematics and its Applications},
  pages={298--316},
  year={1968},
  organization={Citeseer}
}

@article{golomb1968,
  title={Algebraic coding and the {L}ee metric},
  author={Golomb, Solomon W and Welch, Lloyd R},
  journal={Error Correcting Codes},
  pages={175--194},
  year={1968},
  publisher={Wiley New York}
}

@article{Verma2025fcc,
  title={Function-Correcting b-symbol Codes for Locally ($\lambda$,$\rho$,b)-Functions},
  author={Gyanendra Kumar Verma and Anamika Singh and Abhay Kumar Singh},
  journal={IEEE Transactions on Information Theory},
  year={2026},
  volume={72(1)},
 pages={331-341}
}

@inproceedings{Roth1992,
  title={Introduction to coding theory},
  author={Ron M. Roth},
  year={1992},
  url={https://api.semanticscholar.org/CorpusID:62207861}
}

@article{liu2025,
  title={Function-Correcting Codes with Homogeneous Distance},
  author={Liu, H. and Liu, H.},
  journal={Finite Fields and Their Applications},
  year={2026},
  volume={112},
  pages={},
  url={}
}

@article{Chiang1971,
  title={On Channels and Codes for the {L}ee Metric},
  author={J. Chung-yaw Chiang and Jack K. Wolf},
  journal={Information and  Control},
  year={1971},
  volume={19},
  pages={159-173},
  url={https://api.semanticscholar.org/CorpusID:723432}
}

@article{Lee1958,
  title={Some properties of nonbinary error-correcting codes},
  author={C. Y. Lee},
  journal={IRE Transactions on Information Theory},
  year={1958},
  volume={4},
  pages={77-82},
  url={https://api.semanticscholar.org/CorpusID:40949703}
}

@article{Ly2025,
  title={On the Redundancy of Function-Correcting Codes over Finite Fields},
  author={Hoang Ly and Emina Soljanin},
  year={2025},
  journal={13th International Symposium on Topics in Coding (ISTC)},
  url={https://ieeexplore.ieee.org/document/11154510}
}

@article{Premlal2024,
  title={On Function-Correcting Codes},
  author={Rohit Premlal and B. Sundar Rajan},
  journal={IEEE Transactions on Information Theory},
  year={2025},
  volume={71},
  pages={5884-5897},
  url={https://api.semanticscholar.org/CorpusID:280038140}
}

@article{Sampath2025,
  title={A Note on Function Correcting Codes for b-Symbol Read Channels},
  author={Sachin Sampath and B. Sundar Rajan},
  journal={ArXiv},
  year={2025},
  volume={abs/2503.23059},
  url={https://api.semanticscholar.org/CorpusID:277451793}
}

@article{Ge2025,
  title={Optimal redundancy of function-correcting codes},
  author={Gennian Ge and Zixiang Xu and Xiande Zhang and Yijun Zhang},
  journal={IEEE Transactions on Information Theory},
  year={2025},
  volume={71(12)},
  pages={9458 - 9467}
}

@article{Xia2023,
  title={Function-Correcting Codes for Symbol-Pair Read Channels},
  author={Qingfeng Xia and Hongwei Liu and Bocong Chen},
  journal={IEEE Transactions on Information Theory},
  year={2023},
  volume={70},
  pages={7807-7819},
  url={https://api.semanticscholar.org/CorpusID:266573891}
}

@article{Lenz2023,
  title={Function-Correcting Codes},
  author={Andreas Lenz and Rawad Bitar and Antonia Wachter-Zeh and Eitan Yaakobi},
  journal={IEEE Transactions on Information Theory},
  year={2023},
  volume={69},
  pages={5604-5618},
  url={https://api.semanticscholar.org/CorpusID:261106473}
}

@article{Singh2025,
  title={Function-Correcting Codes for $b$-Symbol Read Channels},
  author={Anamika Singh and Abhay Kumar Singh and Eitan Yaakobi},
  journal={ArXiv},
  year={2025},
  volume={abs/2503.12894},
  url={https://api.semanticscholar.org/CorpusID:277066564}
}

@article{Rajput2025,
  title={Function-Correcting Codes for locally bounded functions},
  author={Charul Rajput and B. Sundar Rajan and Ragnar Freij-Hollanti and Camilla Hollanti},
  journal={ 2025 IEEE Information Theory Workshop (ITW)},
  year={2025},
  url={https://ieeexplore.ieee.org/document/11240404}
}

\end{document}